\documentclass[preprint,12pt]{elsarticle}

\usepackage{amssymb}
\usepackage{amsmath}
\usepackage{amsthm}
\usepackage{xcolor}
\usepackage{hyperref}
\hypersetup{
    colorlinks,
    linkcolor={red!50!black},
    citecolor={blue!50!black},
    urlcolor={blue!80!black}
}

\journal{International Journal of Engineering Science}

\usepackage{bm}
\usepackage{hyperref}
\usepackage{soul}
\usepackage[nameinlink]{cleveref}
\usepackage{mathtools}
\usepackage{todonotes}
\newcommand{\abs}[1]{\left\lvert#1\right\rvert}
\let\vec\bm
\newtheorem{remark}{Remark}
\newtheorem{definition}{Definition}
\newtheorem{proposition}{Proposition}
\newtheorem{principle}{Principle}
\newtheorem{theorem}{Theorem}

\newtheorem{example}{Example}
\newtheorem{criterion}{Selection Procedure}

\newcommand{\uze}[1]{#1}
\newcommand{\rev}[1]{#1}
\usepackage{comment}

\begin{document}

\begin{frontmatter}



\title{A kinetic interpretation of thermomechanical restrictions of continua\footnote{In memory of Professor K.~R.~Rajagopal,  
\textborn~November 24, 1950 -- \textdied~March 20, 2025
}}


\author[OX,Prague]{Patrick E.~Farrell} 
\author[Prague]{Josef M\'{a}lek} 
\author[Prague]{Ond\v{r}ej Sou\v{c}ek} 
\author[OX]{Umberto Zerbinati} 

\affiliation[OX]{organization={Mathematical Institute, University of Oxford},
            addressline={Andrew Wiles Building, Woodstock Road},
            city={Oxford},
            postcode={OX2 6GG},
	    state={Oxfordshire},
	    country={United Kingdom}
}
\affiliation[Prague]{organization={Mathematical Institute, Faculty of Mathematics and Physics, Charles University},
	    addressline={Sokolovská 49/83},
            city={Prague},
            postcode={186 75},
	    state={Prague},
	    country={Czechia}
}


\begin{abstract}
Rajagopal and Srinivasa's thermodynamic framework derives constitutive relations in continuum mechanics from two scalar functions describing energy storage and entropy production via a constrained optimization principle. In parallel, kinetic theory obtains constitutive laws through moment closure, most notably via the Chapman--Enskog expansion.

\rev{
	This work has three objectives.
    First, we establish a connection between these approaches by providing a kinetic interpretation of the Rajagopal--Srinivasa principle of maximal entropy production, under appropriate albeit restrictive hypotheses.
    For a Bhatnagar--Gross--Krook-type approximation, we show that the Rajagopal--Srinivasa principle is equivalent to a minimal relaxation-time principle, selecting among admissible constitutive responses the one with the fastest compatible relaxation toward equilibrium.
    Second, we review the classical kinetic description of continua in a manner accessible to those familiar with continuum thermodynamics.}

	\rev{Third, we propose a hybrid Chapman--Enskog--Rajagopal--Srinivasa approach which computes the thermodynamic relations and entropy production from the Chapman--Enskog expansion, and then invokes the Rajagopal--Srinivasa principle to determine the other constitutive relations. This recovers the standard Euler and Navier--Stokes--Fourier constitutive laws for monatomic gases.
	We also demonstrate how different choices of selection procedure can be more informative than the classical Chapman--Enskog closure in the context of an inviscid compressible Leslie--Ericksen model arising in liquid crystals.
}
\end{abstract}


\begin{keyword}
kinetic theory \sep thermodynamics of continua \sep constitutive relation \sep maximization of entropy production \sep Chapman--Enskog expansion



\MSC[2020] 82C40 \sep 80A17 \sep 76A02 \sep 76P05

\end{keyword}
\end{frontmatter}




\section{Introduction}
The construction of constitutive relations that are both physically meaningful and thermodynamically consistent is a central problem in continuum mechanics.
In a seminal contribution, Rajagopal and Srinivasa \cite{rajagopalSrinivasa} showed that material behavior can be fully characterized by two scalar-valued functions describing energy storage and entropy production.
By formulating an appropriate constrained optimization problem, all constitutive relations—including vectorial and tensorial ones—follow automatically, while consistency with the second law of thermodynamics is ensured \cite{vitMalek}.
This framework has since been successfully applied to a wide range of systems, including viscoelastic fluids \cite{Mal-Pru-Skr-Sul18,mkrrkt2015,mkrrkt2018}, phase-field models \cite{chn-heida,achn-heida}, Korteweg-type fluids \cite{heidaMalek, Sou-Hei-Mal20}, and dilute polymeric fluids \cite{dost}.

Independently, kinetic theory provides a mesoscopic route to constitutive modeling.
Starting from the Boltzmann equation, macroscopic balance laws and constitutive relations are obtained via moment closures, most notably through the Chapman--Enskog expansion.
In this setting, the caloric equation of state, entropy balance, and entropy production arise naturally, while explicit constitutive laws—such as those of Euler and Navier--Stokes--Fourier theory—are recovered as successive asymptotic approximations near thermodynamic equilibrium.

Despite their shared thermodynamic foundations, the relationship between these two approaches has remained largely implicit.
The Rajagopal--Srinivasa framework is formulated entirely at the continuum level, whereas kinetic theory derives constitutive laws through increasingly intricate asymptotic expansions.
Beyond first order, classical Chapman--Enskog closures are known to lose thermodynamic consistency by violating the non-negativity of entropy production. 
This raises a natural question: \textit{can kinetic theory be used to inform thermodynamic optimization principles, while avoiding the complex calculations required for higher-order moment closures and retaining thermodynamic consistency by construction?}

\rev{
	The purpose of this work is to make the connection more precise and to partially address this question.
	We do so in two steps.
		We first derive well-known results from kinetic theory and recast them in a framework more familiar to a continuum mechanics audience. This includes the balance equations, the relation between the entropy and energy fluxes, and the constitutive equation for the entropy and the rate of entropy production.
		We then recall the key thermodynamic ingredients underlying the Rajagopal--Srinivasa framework. On this basis, we propose a hybrid Chapman--Enskog--Rajagopal--Srinivasa methodology, in which the Chapman--Enskog expansion is employed only to compute entropy production and equilibrium thermodynamic relations, while the full constitutive closures are obtained through constrained optimization.
	This strategy preserves thermodynamic consistency by construction and avoids the explicit computation of higher-order Chapman--Enskog corrections.
}

A central conceptual contribution of the paper concerns the interpretation of the optimization principle itself. For Bhatnagar--Gross--Krook-type kinetic models, we show that the principle of maximal entropy production can be reinterpreted as a minimal relaxation-time principle: among all admissible constitutive responses compatible with a given order Chapman--Enskog truncation, the physically realized response corresponds to the fastest admissible relaxation toward equilibrium.
This provides a kinetic rationale for the selection mechanism underlying the Rajagopal--Srinivasa framework and offers a new perspective on entropy production maximization.

We illustrate the approach for monatomic gases, recovering the Euler and Navier--Stokes--Fourier constitutive laws.
In these settings, the proposed framework reproduces classical results while significantly reducing the complexity of the derivation and clarifying the thermodynamic structure of the closures.
We also demonstrate, in the context of a compressible Leslie--Ericksen model arising from a different class of kinetic equations, that hybrid approaches between kinetic theory and continuum thermodynamics may provide alternative closures to the classical Chapman--Enskog procedure \cite{farrell}.

\section{The Boltzmann equation}
\label{sec:kinetic_theory_background}

The Boltzmann equation describes the statistical behavior of a system of particles comprising a rarefied gas out of equilibrium.
In particular, it describes the evolution of the one-particle distribution function $f(\vec{x},\vec{v},t)$, describing the density of particles at position $\vec{x}$, with velocity $\vec{v}$, at time $t$, via the partial differential equation
\begin{equation}
\label{eq:boltzmann}
   \partial_t f + \vec{v} \cdot \nabla_{\vec{x}} f = Q(f,f),
\end{equation}
where $Q(f,f)$ is the collision operator, which in the classical Boltzmann equation accounts for the binary collisions between particles \cite{cercignani,Harris}. The collision operator has been generalized by many authors to account for different types of interactions and particle systems \cite{pareschiToscani}. For simplicity we assume here that the equation is posed on the entirety of $(\vec{x},\vec{v},t) \in \mathbb{R}^3 \times \mathbb{R}^3 \times (0, \infty)$, with sufficiently rapid decay of $f$ at infinity to drop boundary terms.

Typically, for molecular interactions with spherical symmetries, these generalized collision operators can be expressed in terms of a scattering kernel $B(\beta, \varphi, \abs{\vec{v}-\vec{v}_*})$, which quantifies the rate of collisions between particles with relative velocity $\abs{\vec{v}-\vec{v}_*}$ and scattering angles $\beta$ and $\varphi$. We thus express the collision operator as 
\begin{equation}
	\label{eq:full_collision}
	Q(f, f) = \int_{\mathbb{R}^3} \int_{\mathbb{S}^2} B(\beta, \varphi, \abs{\vec{v}-\vec{v}_*}) \Big( f'f_*' - ff_* \Big) d\beta\,d\varphi\, d\vec{v}_*,
\end{equation}
where $\mathbb{S}^2$ is the unit sphere in three dimensions, and $\vec{v}'$ and $\vec{v}'_*$ are the post-collision velocities determined uniquely via a prescribed binary collision rule from the pre-collision velocities $\vec{v}$ and $\vec{v}_*$ and the scattering angles. We denote $f'\coloneqq f(\vec{x},\vec{v}',t)$, $f'_*\coloneqq f(\vec{x},\vec{v}'_*,t)$, $f\coloneqq f(\vec{x},\vec{v},t)$, and $f_*\coloneqq f(\vec{x},\vec{v}_*,t)$ for brevity.

The use of the full collision operator \eqref{eq:full_collision} makes calculations quite involved. To keep our exposition simple, we instead employ the
Bhatnagar--Gross--Krook (BGK) collision operator \cite{BGK}, given by
\begin{equation} \label{eq:bgk}
	Q(f,f) \approx Q_{\text{BGK}}(f, f^{(0)}) \coloneqq -\frac{1}{\tau}(f - f^{(0)}),
\end{equation}
where $\tau=\tau(\vec{x},t)$ is the local relaxation time, and $f^{(0)}$ is the local Maxwellian distribution function given by
\begin{equation}
    \label{eq:maxwellian}
	f^{(0)}(\vec{x},\vec{v},t) \coloneqq \frac{\rho(\vec{x},t)}{m}\frac{1}{(\frac{4}{3}\pi e(\vec{x},t))^{3/2}} \exp\left(-\frac{\abs{\vec{v} - \vec{u}(\vec{x},t)}^2}{\frac{4}{3}e(\vec{x},t)}\right),
\end{equation}
where $\rho(\vec{x},t)$ is the local density, $\vec{u}(\vec{x},t)$ is the local mean velocity, and $e(x,t)$ is the internal energy per unit mass, see \eqref{eq:matching_moments}--\eqref{def:energy}.
	\begin{remark}
        The local Maxwellian is the unique function satisfying $Q(f^{(0)},f^{(0)})=0$ and the constraints 
		\begin{equation}
			\label{eq:matching_moments}
			\int_{{\mathbb{R}}^3}
			m\begin{pmatrix}
			1\\
			\vec v\\ 
			\abs{\vec v}^2
			\end{pmatrix} f^{(0)}\,d\vec v 
            = 
            \begin{pmatrix}
                \rho\\
                \rho \vec u\\
                \rho \frac{1}{2}\abs{\vec u}^2 + \rho e
            \end{pmatrix}
            \coloneqq \int_{{\mathbb{R}}^3}
			m\begin{pmatrix}
			1
			\\
			\vec v\\ 
			\abs{\vec v}^2
			\end{pmatrix} f\,d\vec v.
		\end{equation}
        The result that the Maxwellian is uniquely determined by the constraints \eqref{eq:matching_moments} comes as a consequence of the fact that the Boltzmann collision operator has only five independent collision invariants, which is far from trivial.
        Its modern proofs are based on a Gr\"onwall approach \cite{carleman, grad,collisionInvariants, arkeryd, arkerydCercignani}. 
	\end{remark}

The BGK collision operator \eqref{eq:bgk} preserves two important properties of the true collision operator \eqref{eq:full_collision}: it relaxes to the same (equilibrium) Maxwellian distribution, and it possesses the same collision invariants. It also greatly simplifies calculations.
However, there are significant differences: for example the BGK operator predicts an incorrect Prandtl number for monatomic gases \cite{Harris}.
We expect that all of the same calculations in this manuscript could in principle be carried out for the full collision operator, giving qualitatively similar but quantitatively different results.

Assuming all microscopic constituents to be identical with mass $m$, the local macroscopic quantities are defined as moments of the distribution function, i.e.,
\begin{equation}
	\rho(\vec{x},t) = \int_{\mathbb{R}^3} m f(\vec{x},\vec{v},t) d\vec{v},\qquad \rho(\vec{x},t) \vec{u}(\vec{x},t) = \int_{\mathbb{R}^3} m \vec{v} f(\vec{x},\vec{v},t) d\vec{v},\label{eq:newj-1}
\end{equation}
and the specific internal energy and the specific entropy are defined as
\begin{align}
\label{def:energy}
	\rho(\vec{x},t) e(\vec{x},t) &= \int_{\mathbb{R}^3} \frac{m}{2} \abs{\vec{v} - \vec{u}(\vec{x},t)}^2 f(\vec{x},\vec{v},t) d\vec{v},\\
	\label{def:entropy}
	\rho(\vec{x},t) \eta(\vec{x},t) &= - \int_{\mathbb{R}^3} k_B \log \left(f(\vec{x},\vec{v},t)\right) f(\vec{x},\vec{v},t) d\vec{v},
\end{align}
where $k_B$ is the Boltzmann constant.

Another important and measurable macroscopic property is the temperature $\theta=\theta(\vec{x}, t)$, measured in Kelvin. This will be defined below from the equation of state. For the monatomic case under consideration it is related to the internal energy by the equipartition of energy theorem \cite[Section 15.8]{fasanoMarmi}
    \begin{equation} \label{eq:equipartition_of_energy}
        e = \frac{3}{2}{R_s} \theta \rev{=\frac{3}{2}\frac{k_B}{m} \theta},
    \end{equation}
    where ${R_s}\coloneqq \frac{k_B}{m}$ is the so-called specific gas constant.
    The physical intuition behind the equipartition of energy theorem is that the energy is equally distributed across all the degrees of freedom available to the constituents of the system; in the case of spherical monatomic gas molecules, these are the three directions of motion of the center of mass.

It is important to choose the relaxation time $\tau$ of the BGK operator to match that of the Boltzmann operator it aims to approximate.
In particular, for the Boltzmann collision operator for hard spheres \cite[Section 3.3]{Harris}, i.e.~$\mathcal{B}(\beta, \varphi, \abs{\vec{v}-\vec{v}_*})\coloneqq 4r_0\abs{\vec{v}-\vec{v}_*}\sin(\theta)\cos(\theta)$, we can evaluate the collision frequency near the Maxwellian to be
\begin{equation}
    \label{eq:scalingT}
    \frac{1}{\tau} = \int_{\mathbb{R}^3} \int_{0}^{r_0}\!\!\int_{0}^{2\pi}\!\!\!\int_{0}^{\pi} \mathcal{B}(\beta, \varphi, \abs{\vec{v}-\vec{u}}) f^{(0)}\,d\beta\,d\varphi\, d\vec{v} \propto \int_{\mathbb{R}^3} \abs{\vec{v}-\vec{u}}f^{(0)}\,d\vec{v} \propto \rho \sqrt{e},
\end{equation}
where the proportionality constants only depend on the diameter of the hard spheres $r_0$ and the number of degrees of freedom associated with each particle.
\begin{remark}
    We expect the same theory developed in this paper for hard molecules to hold for more general \rev{hard} collision kernels.
    In particular the calculations here presented should carry over to a power-law collision kernel of the form $\mathcal{B}(\beta, \varphi, \abs{\vec{v}-\vec{v}_*})\coloneqq b(\theta, \varphi)\abs{\vec{v}-\vec{v}_*}^\lambda$ for some $\lambda \geq 1$. \rev{See the discussion in \ref{sec:full_collision} for more details.}
\end{remark}

\section{Deriving balance laws from kinetic theory}
We recall the standard derivation of the balance laws of continuum mechanics from kinetic theory \cite[Chapter 5]{cercignani}.
It is well-known that the Boltzmann collision operator \eqref{eq:full_collision} conserves mass, momentum, and energy \cite{cercignani,Harris,collisionInvariants}. 
The BGK operator \eqref{eq:bgk} possesses the same collision invariants as the full one.
That is, for $\psi\in \left\{m, m\vec{v}, \frac{1}{2}m\abs{\vec{v}}^2\right\}$:
\begin{equation}
    \int_{\mathbb{R}^3} \psi(\vec{v}) Q(f,f) d\vec{v} = \int_{\mathbb{R}^3} \psi(\vec{v}) Q_{\text{BGK}}(f, f^{(0)}) d\vec{v} = 0.
\end{equation}
Multiplying the Boltzmann equation \eqref{eq:boltzmann} by the collision invariant $\psi(\vec{v})\equiv m$ and integrating the result over the velocity space{,} we obtain the continuity equation, also known as the mass balance equation:
\begin{equation} \label{eq:balance_law_mass}
    \frac{\partial}{\partial t} \int_{\mathbb{R}^3} mf d\vec{v} + \nabla_x \cdot \int_{\mathbb{R}^3} m\vec{v} f d\vec{v} = 0 \Rightarrow \frac{\partial \rho}{\partial t} + \nabla_x \cdot (\rho \vec{u}) = 0.
\end{equation}

Similarly, with $\psi(\vec{v}) \equiv m\vec{v}$, we obtain, after a little algebra, the linear momentum balance equation. We briefly sketch the derivation, since we will employ similar steps below.
Multiplying the Boltzmann equation \eqref{eq:boltzmann} by $m\vec{v}$ and integrating over the velocity space yields 
\begin{equation}
    \frac{\partial}{\partial t} \int_{\mathbb{R}^3} m\vec{v} f d\vec{v} + \nabla_x \cdot \int_{\mathbb{R}^3} m\vec{v}\otimes \vec{v} f d\vec{v} = 0.\label{eq:newj0}
\end{equation}
Introducing the vector field $\vec{w} = \vec{w}(\vec{x},\vec{v},t)$ as 
\begin{equation}
    \vec{w} \coloneqq \vec{v} - \vec{u} \qquad \textrm{i.e. } \vec{w} (\vec{x},\vec{v},t) \coloneqq \vec{v} - \vec{u}(\vec{x},t), \label{eq:newj1}
\end{equation}
and replacing $\vec{v}$ by $\vec{w} + \vec{u}$ in the second term of \eqref{eq:newj0}, using also the fact that $\int_{\mathbb{R}^3}   m(\vec{v}-\vec{u}) f d\vec{v}=0$, we rewrite \eqref{eq:newj0} as 
\begin{equation}
	\frac{\partial}{\partial t} (\rho \vec{u}) + \nabla_x \cdot \left( \int_{\mathbb{R}^3} m \vec{w} \otimes \vec{w} f d\vec{v} \right) + \nabla_{\vec{x}}\cdot \left(\rho \vec{u}\otimes \vec{u}\right) = 0.\label{eq:newj2}
\end{equation}
Introducing then the Cauchy stress tensor $\mathbb{T}$ through 
\begin{equation}
    \mathbb{T}(\vec{x},t) \coloneqq - \int_{\mathbb{R}^3} m\vec{w}(\vec{x},\vec{v},t)\otimes\vec{w}(\vec{x},\vec{v},t)f(\vec{x},\vec{v},t) d\vec{v}, 
    \label{eq:newj3}
\end{equation}
we finally obtain the linear momentum balance in the form
\begin{equation}
	\frac{\partial}{\partial t} (\rho \vec{u}) + \nabla_{\vec{x}}\cdot \left(\rho \vec{u}\otimes \vec{u}\right) - \nabla_{\vec{x}} \cdot \mathbb{T} = 0,
\end{equation}
or alternatively, with help of mass balance equation \eqref{eq:balance_law_mass},

\begin{equation} \label{eq:balance_momentum}
    \rho \left( \frac{\partial \vec{u}}{\partial t} + (\nabla_{\vec{x}} \vec{u}) \vec{u} \right) - \nabla_{\vec{x}} \cdot \mathbb{T} = 0.
\end{equation}

We are left to derive the balance equation of energy by multiplying the Boltzmann equation \eqref{eq:boltzmann} by $\psi(\vec{v}) \equiv \frac{1}{2}m\abs{\vec{v}}^2$ and integrating over the velocity space.
Proceeding as before we start from
\begin{equation}
	\partial_t \mathcal{E} + \nabla_x \cdot \mathcal{F} = 0,\label{eq:newj8}
\end{equation}
where the total energy $\mathcal{E}=\mathcal{E}(\vec{x},t)$ and the energy flux $\mathcal{F}=\mathcal{F}(\vec{x},t)$ are defined as
\begin{align}
    \mathcal{E}(\vec{x},t) &\coloneqq \int_{\mathbb{R}^3} \frac{1}{2} m \abs{\vec{v}}^2 fd\vec{v}, 
    \qquad \mathcal{F}(\vec{x},t) \coloneqq \int_{\mathbb{R}^3} \frac{1}{2} m \abs{\vec{v}}^2 \vec{v} f d\vec{v}.
\end{align}
Replacing $\vec{v}$ by $\vec{w} + \vec{u}$, see \eqref{eq:newj1}, and using \eqref{eq:newj-1}--\eqref{def:entropy} and  
$\int_{\mathbb{R}^3}   m \vec{w} f d\vec{v}=0$, we obtain
\begin{equation}
    \begin{split}
	\mathcal{E} &= \int_{\mathbb{R}^3} m \, \frac{\abs{\vec{w+u}}^2}{2} f d\vec{v} \\& = \int_{\mathbb{R}^3} m \, \frac{\abs{\vec{w}}^2}{2} f d\vec{v} + \vec u\cdot\int_{\mathbb{R}^3} m \vec{w} f d\vec{v} + \frac{\abs{\vec{u}}^2}{2}\int_{\mathbb{R}^3} m f d\vec{v} \,  \\
	&= \rho e + \rho \frac{\abs{\vec{u}}^2}{2}.
    \end{split} \label{eq:newj5}
\end{equation}
Proceeding similarly, we rewrite the energy flux $\mathcal{F}$ as
\begin{align}\label{eq:newj6}
\begin{split}
     \mathcal{F}&= \int_{\mathbb{R}^3} m\, \frac{ \abs{\vec{w}+\vec{u}}^2}{2} (\vec{w}+\vec{u}) f d\vec{v} \\
     &= \int_{\mathbb{R}^3} m \left(\frac{\abs{\vec{w}}^2}{2} + \vec{w}\cdot \vec{u} + \frac{\abs{\vec{u}}^2}{2}\right) (\vec{w}+\vec{u}) f d\vec{v} \\
			   &=  \int_{\mathbb{R}^3} m \, \vec{w}\frac{\abs{\vec{w}}^2}{2} f d\vec{v} + \left(\int_{\mathbb{R}^3} m \,(\vec{w}\otimes\vec{w}) f d\vec{v} \right)\vec{u} + \frac{\abs{\vec{u}}^2}{2}\int_{\mathbb{R}^3} m \,\vec{w} f d\vec{v}  \\ 
              &\quad + \vec{u}\int_{\mathbb{R}^3} m \frac{\abs{\vec{w}}^2}{2} f d\vec{v} \,   + (\vec{u}\otimes\vec{u})\int_{\mathbb{R}^3} m \vec{w} f d\vec{v} + \frac{\abs{\vec{u}}^2}{2}\, \vec{u}\int_{\mathbb{R}^3} m f d\vec{v}\, \\
			   &= \vec{Q} \uze{\,-\,} \mathbb{T}\vec{u} + \rho e \vec{u} + \rho \frac{\abs{\vec{u}}^2}{2} \vec{u}, 
\end{split}
\end{align}
where we employed again the identity $\int_{\mathbb{R}^3}m\vec{w}f d\vec v = 0$, and where we defined the heat flux vector $\vec{Q}$ as
\begin{equation}\label{p:energyflux}
    \vec{Q}(\vec{x},t) \coloneqq \int_{\mathbb{R}^3} \frac{1}{2} m \abs{\vec{w}(\vec{x},\vec{v}, t)}^2 \vec{w}(\vec{x},\vec{v}, t) f(\vec{x},\vec{v},t) d\vec{v}.
\end{equation}
Thus we can rewrite \eqref{eq:newj8} as 
\begin{equation}
	\frac{\partial}{\partial t} \left( \frac{1}{2} \rho \abs{\vec{u}}^2 + \rho e \right) + \nabla_{\vec{x}} \cdot \left( \frac{1}{2} \rho \abs{\vec{u}}^2 \vec{u} + \rho e \vec{u} {\,-\,} \mathbb{T} \vec{u} + \vec{Q} \right) = 0.
\end{equation}
Using the balance equations for mass and linear momentum, we can rearrange the terms to obtain a more familiar form of the balance equation for energy as the evolution equation for internal energy:
\begin{equation} \label{eq:evolution_internal_energy}
    \rho \left( \frac{\partial e}{\partial t} + \vec{u}\cdot \nabla_{\vec{x}} e \right) {\,-\,} \mathbb{T} : \nabla_{\vec{x}} \vec{u} + \nabla_{\vec{x}} \cdot \vec{Q} = 0.
\end{equation}

Finally, we would like to derive the balance equation of entropy. To this end, we multiply the Boltzmann equation \eqref{eq:boltzmann} by $\psi(\vec{x}, \vec{v}, t) = {-}k_B\rev{(\log f(\vec{x},\vec{v},t)+1)}$ and integrate over the velocity space to obtain
\begin{equation}
    \frac{\partial}{\partial t} \int_{\mathbb{R}^3} {-} k_B f \log f d\vec{v} + \nabla_{\vec{x}} \cdot \int_{\mathbb{R}^3} {-}k_B \vec{v} f \log f d\vec{v} = {-}\int_{\mathbb{R}^3} k_B \, Q_{\text{BGK}}(f, f^{(0)}) \log f d\vec{v}, \label{eq:newj9}
\end{equation}
where the right-hand side does not vanish in this case since ${-}\log f(\vec{x},\vec{v},t)$ is not a collision invariant. Using again the decomposition $\vec{v}=\vec{u}+\vec{w}$ and the definition of entropy density \eqref{def:entropy}, we can rewrite 
\eqref{eq:newj9} in the form of balance equation for entropy:
\begin{equation}
    \label{eq:almostClausiusDuhem}
    \rho \left( \frac{\partial \eta}{\partial t} + \vec{u}\cdot \nabla_{\vec{x}} \eta \right) + \nabla_{\vec{x}} \cdot \vec{\Phi} = \xi\ ,
\end{equation}
with the entropy flux $\vec{\Phi}$ defined as
\begin{equation}
\vec{\Phi}(\vec{x},t) \coloneqq \int_{\mathbb{R}^3} {-} k_B \, \vec{w}(\vec{x},\vec{v},t) f(\vec{x},\vec{v},t) \log f(\vec{x},\vec{v},t) d\vec{v},
\end{equation}
and the entropy production $\xi$ defined as
\begin{align}
\label{eq:BGK_entropy_production}
\xi(\vec{x},t) &\coloneqq \int_{\mathbb{R}^3} {-}k_B \, Q_{\text{BGK}}(f(\vec{x},\vec{v},t),f^{(0)}(\vec{x},\vec{v},t)) \log f(\vec{x},\vec{v},t) d\vec{v}\\
&= \frac{1}{\tau} \int_{\mathbb{R}^3} \, k_B {(f - f^{(0)})} \log f d\vec{v}.
\end{align}

\begin{remark}
The balance equation for entropy \eqref{eq:almostClausiusDuhem} is reminiscent of the Clausius--Duhem inequality from continuum thermodynamics \cite[Chapter I, Eq.~I.34]{Truesdell}. In fact, we will show in the next section {that} the entropy production rate $\xi$ is non-negative and later we will show when we can recover the classical form of the Clausius--Duhem inequality, determining the entropy flux vector $\vec{\Phi}(\vec{x},t)=\vec{Q}(\vec{x},t)/\theta(\vec{x},t)$.
\end{remark}

\section{H-Theorem and entropy production}
A key aspect of the Boltzmann equation is the celebrated H-theorem, which states that the entropy of a closed system never decreases over time, i.e.~the Boltzmann equation is irreversible. This result is classical, but we include it here for completeness of exposition.
Starting from the definition of the entropy production rate $\xi(\vec{x},t)$ given in \eqref{eq:BGK_entropy_production} and adding and subtracting the term {$\int_{\mathbb{R}^3} k_B(f^{(0)}{-}f) \log f^{(0)} d\vec{v}$} we obtain
\begin{equation} \label{eq:expression_for_xi}
    \xi(\vec{x},t) \!=\!\frac{1}{\tau}\! \left( \int_{\mathbb{R}^3}\!\! k_B (f - f^{(0)}) \left(\log f - \log f^{(0)} \right) d\vec{v} - \int_{\mathbb{R}^3}\!\! k_B (f^{(0)} - f) \log f^{(0)} d\vec{v} \right).
\end{equation}
We now show that the second integral vanishes because the BGK collision operator conserves mass, momentum, and energy. Using the explicit expression \eqref{eq:maxwellian} for the Maxwellian $f^{(0)}$ we can compute
\begin{equation}\label{TTT}
	\log f^{(0)}  = \gamma - \frac{\abs{\vec{w}}^2}{\frac{4}{3}e} \quad \rev{\textrm{ with } \gamma :=\log\left(\frac{\rho}{m}\right)  - \frac{3}{2} \log\left(\frac{4\pi e}{3}\right)},
\end{equation}
where $\gamma= \gamma(\vec x,t)$ depends on space and time but not velocity. \rev{Since, by \eqref{eq:matching_moments}, 
$f$ and $f^{(0)}$ have the same first three moments with respect to velocity, we observe that} 
\begin{align}
	\int_{\mathbb{R}^3}\!\! (f^{(0)} \!-\! f)\! \log f^{(0)} d\vec{v} \!=\! \gamma \!\!\int_{\mathbb{R}^3}\!\!(f^{(0)}\! -\! f) d\vec{v} - \frac{3}{4e}\!\int_{\mathbb{R}^3}\!\! \abs{\vec{w}}^2 (f^{(0)}\! -\! f) d\vec{v} \!=\! 0.
\end{align}

We next consider the first term in \eqref{eq:expression_for_xi}. Using the monotonicity property of the logarithm, i.e.~$(f - f^{(0)})\big(\log(f) - \log(f^{(0)})\big) \geq 0$, we conclude that the entropy production rate is always non-{negative}. Thus we have
\begin{equation}
    \label{eq:entropyBalanceLaw}
    \rho \left( \frac{\partial \eta}{\partial t} + \vec{u}\cdot \nabla_{\vec{x}} \eta\right) + \nabla_{\vec{x}} \cdot \vec{\Phi} = \xi\quad \text{where}\quad \xi \geq 0.
\end{equation}
We have thus shown that the entropy production rate is always non-negative, which is a particular form of the second law of thermodynamics.

\begin{remark}
The analogous result can be obtained for the full Boltzmann collision operator by using well-known properties of the scattering kernel $B(\theta, \varphi, \abs{\vec{v}-\vec{v}_*})$ \cite[Section 1.7]{cercignani}.
\end{remark}
\begin{remark}[Various formulations of the second law of thermodynamics]
    \label{rmk:second_law_formulations}
    Considering $\Omega$ an open bounded subdomain of $\mathbb{R}^3$ constant in time, we introduce the integrated entropy $H$ as
    \begin{equation}
        H(t) \coloneqq \int_{\Omega} \rho(\vec{x},t) \eta(\vec{x},t) d\vec{x},
    \end{equation}
    Integrating the entropy balance law \eqref{eq:entropyBalanceLaw} over $\Omega$ we obtain
    \begin{equation}
        \label{eq:thermal_bath}
        \frac{d}{dt} H(t) \geq \int_{\partial \Omega} (\vec{\Phi} \cdot \vec{n}) \,dS.
    \end{equation}
    Considering an isolated system, defined via the condition $\vec{\Phi} \cdot \vec{n} = 0$ on $\partial \Omega$, we obtain 
    \begin{equation}
        \label{eq:clausius_inequality}
        \frac{d}{dt} H(t) \geq 0,
    \end{equation}
    yet another form of the second law of thermodynamics.
    Thermodynamic isolation can be achieved by many different boundary conditions \cite[Section 1.7]{cercignani}.

    The second law of thermodynamics has been the subject of much debate since its inception in the $19^{\text{th}}$ century.
    Many authors have argued in favor of different formalizations of the intuitive notion behind the second law of thermodynamics.
    We have presented here several different formalizations of the second law of thermodynamics: 
    one is the fact that the entropy of an isolated system never decreases over time (see \eqref{eq:clausius_inequality}) and is the famous H-theorem.
    In Truesdell's terminology \cite[Chapter I, Section 5]{Truesdell}, this is known as the Clausius' inequality.
    Another form we have presented is the fact that the entropy of a system is governed by \eqref{eq:thermal_bath}, which is a generalization of what Truesdell refers to as the heat-bath inequality \cite[Chapter I, Section 5]{Truesdell}.
    Rational thermodynamics features a third formalization of the second law of thermodynamics, known as the Clausius--Duhem inequality, which we will recover in \Cref{sec:EoS} below.
    Using the Clausius--Duhem inequality, the heat-bath inequality presented in \cite[Chapter 1, Section 5]{Truesdell} can be retrieved from \eqref{eq:thermal_bath}.

    The Rajagopal--Srinivasa approach starts with the general formulation of the balance equation of entropy \eqref{eq:entropyBalanceLaw} and then postulates non-negative constitutive equations for the entropy production $\xi$, which can be seen as yet another formalization of the second law of thermodynamics.
    This formalization of the second law of thermodynamics is very general and implies all of the other formalizations we have presented so far, as well as the Boltzmann inequality, i.e.~the fact that the local entropy production is non-negative.

\end{remark}

\section{Chapman--Enskog approach to constitutive equations for entropy and entropy production}
\label{sec:EoS}
We will assume that near thermodynamic equilibrium the distribution function $f$ can be approximated as a perturbation of the local Maxwellian equilibrium distribution function $f^{(0)}$.
This intuition is formalized by the Chapman--Enskog expansion \cite{cercignani,Chapman}, i.e.,
\begin{equation}
    \label{eq:CEexpansion}
    f(\vec{x},\vec{v},t) = f^{(0)}(\vec{x},\vec{v},t) + (\mathrm{Kn}) f^{(1)}(\vec{x},\vec{v},t) + (\mathrm{Kn})^2 f^{(2)}(\vec{x},\vec{v},t) + \cdots,
\end{equation}
where $\mathrm{Kn} = \frac{\lambda}{L}$ is the Knudsen number, the ratio between the mean free path of the particles $\lambda$ and a characteristic macroscopic length scale $L$ of the system under consideration.
The Chapman--Enskog expansion is a formal power series of the distribution function $f$ around the local Maxwellian $f^{(0)}$ in terms of the Knudsen number $\mathrm{Kn}$.

Using the relation between the mean free path, the thermal velocity, most probable speed $V_{\theta} \coloneqq \rev{(2R_s\theta)^{\frac{1}{2}}}$, and the average relaxation time $\bar{\tau}$,
\begin{equation}
    \label{eq:tau_Kn}
    \lambda = V_{\theta} \bar{\tau} \quad\Rightarrow\quad \mathrm{Kn} = \frac{\bar{\tau} V_{\theta}}{L}, \quad \bar{\tau} = \frac{\mathrm{Kn} \, L}{V_{\theta}},
\end{equation}
we can express the relaxation time in terms of $\mathrm{Kn}$.
\subsection{Chapman--Enskog expansions and its consequences for macroscopic moments}
In order to use the approximation of $f$ presented in \eqref{eq:CEexpansion} we need to determine $f^{(k)}$ for $k\geq 1$. This is done recursively starting from $k=0$.
To determine $f^{(1)}$ we introduce two time variables, corresponding to different time scales, i.e.~$t_0=t$ and $t_1=\mathrm{Kn}\, t \ll t_0$, and we will treat $f$ as a function of the two time scales independently \cite[Chapter 4]{dellar}, i.e.~$f=f(\vec{x},\vec{v},t_0,t_1)$.
It is common to refer to $t_0$ as the advective time scale, while we will call $t_1$ the diffusive or viscous time scale. Splitting the time derivative in terms of $t_0$ and $t_1$,i.e.~replacing $\partial_t\rightarrow \partial_{t_0} + \mathrm{Kn}\,\partial_{t_1}$, we rewrite the Boltzmann equation \eqref{eq:boltzmann} with \eqref{eq:bgk} as
\begin{equation}
    \label{eq:BoltzmannMultipleScales}
    \partial_{t_0}f + \mathrm{Kn}\,\partial_{t_1}f + \vec{v} \cdot \nabla_{\vec{x}} f = - \frac{1}{\mathrm{Kn}\,\tilde{\tau}}(f-f^{(0)})
\end{equation}
where $\tilde{\tau}$ is the relaxation time rescaled by the Knudsen number, i.e.~$\tau=\mathrm{Kn}\,\tilde{\tau}$.

Expanding $f$ with the Chapman--Enskog ansatz \eqref{eq:CEexpansion}
does not yield a unique set of evolution equations for $f^{(0)}$ and $f^{(1)}$.
To retrieve uniqueness after the expansion, it is standard to demand that hydrodynamical quantities evolve slowly while microscopic quantities relax quickly. This translates to the assumption that
\begin{equation}
    \label{eq:solvability}
    \textrm{for all } k\in\mathbb{N}_{\ge 1}: \quad \int_{\mathbb{R}^3} f^{(k)}\, d\vec{v} = 0, \quad \int_{\mathbb{R}^3} \vec{v}f^{(k)}\,d\vec{v}=0, \quad \int_{\mathbb{R}^3}\abs{\vec{v}}^2f^{(k)}\,d\vec{v} = 0. 
\end{equation}
In other words, the higher-term corrections relax so quickly so as not to contribute to the macroscopic observables. Assumption \eqref{eq:solvability} is referred to as the {\it solvability condition} \cite[Section 5.3]{cercignani2}.
\begin{remark}
    Note that the solvability condition is very natural in view of the property of the Maxwellian $f^{(0)}$ capturing the first three moments, with respect to the velocity, of the full distribution function $f$, as in \eqref{eq:matching_moments}.
    The solvability condition then extends this assertion to all truncations of the Chapman--Enskog expansion of $f$.
\end{remark}
\begin{remark}
    Note that the solvability condition \eqref{eq:solvability} implies that both the first moment of $f^{(k)}$ with respect to $\vec{w}$ vanishes, i.e.~for all $k\in \mathbb{N}_{\ge 1}$: 
    \begin{equation}
        \int_{\mathbb{R}^3} \vec{w} f^{(k)} d\vec{v} = \int_{\mathbb{R}^3} (\vec{v} - \vec{u}) f^{(k)} d\vec{v} = \int_{\mathbb{R}^3} \vec{v} f^{(k)} d\vec{v} - \vec{u} \int_{\mathbb{R}^3} f^{(k)} d\vec{v} = 0,
    \end{equation}
    and the second moment of $f^{(k)}$ with respect to $\vec{w}$ vanishes, i.e.~for all $k\in \mathbb{N}_{\ge 1}$: 
    \begin{equation}
        \int_{\mathbb{R}^3} \abs{\vec{w}}^2 f^{(k)} d\vec{v} = \int_{\mathbb{R}^3} \left( \abs{\vec{v}}^2 - 2(\vec{v}\cdot \vec{u}) + \abs{\vec{u}}^2 \right) f^{(k)} d\vec{v} = 0.
    \end{equation}
\end{remark}

For the moment being what we care about is that \eqref{eq:solvability} carries an important implication for the macroscopic quantities $\rho(\vec{x},t)$, $\vec{u}(\vec{x},t)$, $e(\vec{x},t)$, and $\eta(\vec{x},t)$, and also the {heat and} entropy flux{es $\vec{Q}$ and} $\vec{\Phi}$.

\begin{definition}[Expansion independence]
Let $\varphi[f]$ be a macroscopic observable computed from $f$. We say that $\varphi$ is expansion-independent if
\begin{equation}
\textrm{for all } k\in \mathbb{N}: \quad \varphi[f] = \varphi\left[\sum_{i=0}^k (\mathrm{Kn})^i f^{(i)}\right],
\end{equation}
i.e.~the quantity is unchanged by evaluating it on any partial sum of the Chapman--Enskog series.
\end{definition}

Henceforth we denote by $\cdot^{(k)}$ a macroscopic quantity computed using the Chapman--Enskog series up to order $k$, i.e.~$\sum_{i=0}^{k} (\mathrm{Kn})^{i} f^{(i)}$.

\begin{proposition}
Each of $\rho(\vec{x},t)$, $\vec{u}(\vec{x},t)$, $e(\vec{x},t)$ are expansion-independent under the solvability condition \eqref{eq:solvability}.

\end{proposition}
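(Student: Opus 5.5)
The plan is to use the linearity of the moment maps in $f$ together with the solvability condition \eqref{eq:solvability}. Fix $k\in\mathbb{N}$ and write $f_{\le k}\coloneqq\sum_{i=0}^{k}(\mathrm{Kn})^i f^{(i)}$. The goal is to show that $f$ and $f_{\le k}$ have the same zeroth, first and second velocity moments. Then $\rho,\vec u,e$, which are determined by these moments, coincide whether computed from $f$ or from $f_{\le k}$.

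First, the density. Since $\rho[g]=\int m g\,d\vec v$ is linear in $g$,
\begin{equation*}
\rho[f_{\le k}] = \int_{\mathbb{R}^3} m f^{(0)}\,d\vec v + \sum_{i=1}^{k}(\mathrm{Kn})^i\int_{\mathbb{R}^3} m f^{(i)}\,d\vec v .
\end{equation*}
The first term equals $\rho[f]$ by \eqref{eq:matching_moments}, and each term of the sum vanishes by \eqref{eq:solvability}. The momentum density $\rho\vec u=\int m\vec v g\,d\vec v$ is handled identically. Since $\rho[f_{\le k}]=\rho[f]$, dividing by $\rho$ (assumed positive) gives $\vec u[f_{\le k}]=\vec u[f]$.

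The energy is the only step needing care, because $e$ is defined through \eqref{def:energy} with $\vec w=\vec v-\vec u$. Here $\vec u$ itself depends on the distribution, so the map $g\mapsto\rho e$ is not a priori linear in $g$. I would avoid this in either of two ways. One is to use the identity \eqref{eq:newj5}, which gives $\rho e=\int \tfrac m2|\vec v|^2 g\,d\vec v-\tfrac12\rho|\vec u|^2$; this expresses $\rho e$ through the linear second moment and the already-matched $\rho,\vec u$. The other is to first fix $\vec u=\vec u[f]$, which is legitimate since we have just shown it agrees for $f_{\le k}$, and then apply the remark following \eqref{eq:solvability} that $\int|\vec w|^2 f^{(i)}\,d\vec v=0$ for $i\ge1$. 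Either way, the $f^{(0)}$ contribution reproduces $\rho e[f]$ by \eqref{eq:matching_moments}, so $e[f_{\le k}]=e[f]$.

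The only subtlety is therefore this ordering: establish $\rho$ and $\vec u$ before treating $e$, so that the $\vec u$ appearing inside $\vec w$ is the same for both distributions. The case $k=0$ is exactly \eqref{eq:matching_moments}, and since $k$ was arbitrary the proposition follows.
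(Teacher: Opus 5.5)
Your proposal is correct and follows essentially the same route as the paper: linearity of the velocity moments, the solvability condition \eqref{eq:solvability} to remove the $f^{(i)}$, $i\ge 1$, contributions, and \eqref{eq:matching_moments} to identify the $f^{(0)}$ term with the corresponding moment of $f$. Your second option for the energy is exactly what the paper does: fix $\vec u$ first, then use $\int|\vec w|^2 f^{(i)}\,d\vec v=0$. Your point about establishing $\rho$ and $\vec u$ before $e$ makes explicit an ordering the paper leaves implicit.
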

\begin{proof}
    From the solvability condition \eqref{eq:solvability} we trivially see that the density is expansion-independent, i.e.
    \begin{align}
        \rho{^{(k)}}(\vec{x},t) &= \int_{\mathbb{R}^3} {m}\sum_{i=0}^k (\mathrm{Kn})^i f^{(i)}(\vec{x},\vec{v},t) d\vec{v} \\
        &= \int_{\mathbb{R}^3} {m} f^{(0)}(\vec{x},\vec{v},t) d\vec{v} + \int_{\mathbb{R}^3}{m} \sum_{i=1}^k (\mathrm{Kn})^i f^{(i)}(\vec{x},\vec{v},t) d\vec{v}\\
        &=\int_{\mathbb{R}^3} {m} f^{(0)}(\vec{x},\vec{v},t) d\vec{v} {= \rho^{(0)}(\vec x, t) = \rho(\vec x, t)},
    \end{align}
    {where the last equality follows from \eqref{eq:matching_moments}.}    
    
    Similarly, using also \eqref{eq:matching_moments}, for the macroscopic velocity we have
    \begin{align}
        (\rho(\vec{x},t) \vec{u}(\vec{x},t)){^{(k)}} &= \int_{\mathbb{R}^3} {m}\vec{v} \sum_{i=0}^k (\mathrm{Kn})^i f^{(i)}(\vec{x},\vec{v},t) d\vec{v} \\
        &= \int_{\mathbb{R}^3} {m}\vec{v} f^{(0)}(\vec{x},\vec{v},t) d\vec{v} + \int_{\mathbb{R}^3} {m}\vec{v} \sum_{i=1}^k (\mathrm{Kn})^i f^{(i)}(\vec{x},\vec{v},t) d\vec{v}\\
        &= \int_{\mathbb{R}^3} {m}\vec{v} f^{(0)}(\vec{x},\vec{v},t) d\vec{v} { = (\rho(\vec x, t)\vec u(\vec x,t))^{(0)} = \rho(\vec x, t)\vec u(\vec x,t)}.
    \end{align}
    
    Lastly, for the {specific} internal energy 
     we have
    \begin{align}
        (\rho(\vec{x},t) e(\vec{x},t)){^{(k)}} &= \int_{\mathbb{R}^3} \frac{1}{2} m \abs{\vec{w}}^2 \sum_{i=0}^k (\mathrm{Kn})^i f^{(i)}(\vec{x},\vec{v},t) d\vec{v} \\
        &= \int_{\mathbb{R}^3}\frac{1}{2}m\left[ \abs{\vec{w}}^2 f^{(0)}(\vec{x},\vec{v},t) + \abs{\vec{w}}^2 \sum_{i=1}^k (\mathrm{Kn})^i f^{(i)}(\vec{x},\vec{v},t)\right] d\vec{v}\\
        &= \int_{\mathbb{R}^3} \frac{1}{2} m \abs{\vec{w}}^2 f^{(0)}(\vec{x},\vec{v},t) d\vec{v} {= (\rho(\vec x, t)e(\vec x,t))^{(0)}} \\
        &{= \rho(\vec x, t)e(\vec x,t)}        
       \end{align}
{once more based on \eqref{eq:matching_moments}.}

\end{proof}

\begin{proposition}
\label{proposition-entropy}
Under the solvability condition \eqref{eq:solvability}, the specific entropy is such that $\rev{\rho}\eta^{(1)} = \rev{\rho} \eta^{(0)}+\mathcal{O}(\mathrm{Kn}^2)$.
\end{proposition}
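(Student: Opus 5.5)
Write $f^{[1]} \coloneqq f^{(0)} + \mathrm{Kn}\, f^{(1)}$. The plan is to Taylor-expand the integrand $-k_B\, s\log s$ of \eqref{def:entropy} around $s=f^{(0)}$ in the direction $\mathrm{Kn}\,f^{(1)}$. The first-order term will vanish by the solvability condition \eqref{eq:solvability}, exactly as the second integral in \eqref{eq:expression_for_xi} did in the H-theorem computation. Concretely, with $\phi(s)=s\log s$ we have $\phi'(s)=\log s+1$ and $\phi''(s)=1/s$, so pointwise in $(\vec x,\vec v,t)$
\begin{equation*}
f^{[1]}\log f^{[1]} = f^{(0)}\log f^{(0)} + \mathrm{Kn}\,\bigl(\log f^{(0)}+1\bigr) f^{(1)} + \frac{(\mathrm{Kn})^2}{2}\,\frac{(f^{(1)})^2}{\tilde f},
\end{equation*}
where $\tilde f$ lies between $f^{(0)}$ and $f^{[1]}$ (Lagrange form of the remainder).

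Next I multiply by $-k_B$ and integrate over $\vec v$. The zeroth-order term gives $\rho\eta^{(0)}$. For the linear term I insert the explicit logarithm of the Maxwellian \eqref{TTT}, $\log f^{(0)}=\gamma-\frac{3\abs{\vec w}^2}{4e}$, where $\gamma$ does not depend on $\vec v$. This gives
\begin{equation*}
\int_{\mathbb{R}^3}\bigl(\log f^{(0)}+1\bigr)f^{(1)}\,d\vec v=(\gamma+1)\int_{\mathbb{R}^3} f^{(1)}\,d\vec v-\frac{3}{4e}\int_{\mathbb{R}^3}\abs{\vec w}^2 f^{(1)}\,d\vec v .
\end{equation*}
The first integral vanishes by \eqref{eq:solvability}. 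The second vanishes by the remark following \eqref{eq:solvability}, which shows $\int\abs{\vec w}^2 f^{(k)}\,d\vec v=0$. Here the density $\rho$ multiplying $\eta$ is the same at both orders because it is expansion-independent (preceding proposition). That is why the statement is phrased for $\rho\eta$.

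Finally, the remainder term must be $\mathcal{O}(\mathrm{Kn}^2)$. This requires
\begin{equation*}
\int_{\mathbb{R}^3}\frac{(f^{(1)})^2}{\tilde f}\,d\vec v
\end{equation*}
to be finite, uniformly for small $\mathrm{Kn}$. This is the main obstacle, since $\tilde f$ could be small in the Maxwellian tails, and $f^{[1]}$ may even become negative there. My first attempt is to work at the level of formal power series, as is standard for the Chapman--Enskog expansion. The $\mathrm{Kn}^2$ coefficient is then $-\frac{k_B}{2}\int (f^{(1)})^2/f^{(0)}\,d\vec v$. Writing $f^{(1)}=f^{(0)}\Psi$, where $\Psi$ is polynomial in $\vec w$ (as the BGK first-order solution $f^{(1)}=-\tilde\tau(\partial_{t_0}+\vec v\cdot\nabla_{\vec x})f^{(0)}$ shows), this coefficient equals $-\frac{k_B}{2}\int f^{(0)}\Psi^2\,d\vec v$. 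That integral is finite because Gaussian moments of polynomials are finite. If a rigorous statement is wanted, I would instead assume $\abs{\mathrm{Kn}\,\Psi}\le 1/2$ on the relevant velocity set. Then $\tilde f\ge f^{(0)}/2$, and the remainder is bounded by $k_B(\mathrm{Kn})^2\int f^{(0)}\Psi^2\,d\vec v$.
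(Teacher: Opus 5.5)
Your proposal is correct and follows the paper's proof: you Taylor-expand to first order in $\mathrm{Kn}$, insert \eqref{TTT}, and kill the linear term with \eqref{eq:solvability}, treating the $\mathrm{Kn}^2$ remainder formally. You also keep the extra $\mathrm{Kn}\int f^{(1)}\,d\vec v$ term coming from $\phi'(s)=\log s+1$, which the paper silently drops and which vanishes by solvability anyway. Your optional ``rigorous'' variant does not work as stated: whenever $f^{(1)}\neq 0$, $\Psi$ is a nonconstant polynomial in $\vec w$, so $\abs{\mathrm{Kn}\,\Psi}\le 1/2$ cannot hold on all of $\mathbb{R}^3$, and where $f^{(0)}+\mathrm{Kn}\,f^{(1)}<0$ the quantity $\eta^{(1)}$ is not even defined. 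The statement is only meant at the level of formal Chapman--Enskog series, as in the paper's proof, and there your argument is complete.
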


\begin{proof}
\rev{Starting from the definition of $\rho\eta^{(1)}$, i.e.,}  
\begin{equation}
    \rho \eta^{(1)} =\int_{\mathbb{R}^3} {-}k_B \,\left( f^{(0)} + \mathrm{Kn} f^{(1)} \right) \log\left( f^{(0)} + \mathrm{Kn} f^{(1)} \right) d\vec{v},
\end{equation}
and using the Taylor expansion of the logarithm
\begin{equation}\label{TTT1}
    \log(f^{(0)} + \mathrm{Kn} f^{(1)}) = \log(f^{(0)}) + \mathrm{Kn}\, \frac{f^{(1)}}{f^{(0)}} + \mathcal{O}(\mathrm{Kn}^2)
\end{equation}
we can rewrite the {specific} entropy $\rev{\rho\eta^{(1)}}$ as
\begin{align}
    \rho \eta^{(1)} &= -\int_{\mathbb{R}^3} {k_B}f^{(0)} \log f^{(0)} d\vec{v} - \mathrm{Kn} \int_{\mathbb{R}^3} k_B f^{(1)} \log f^{(0)} d\vec{v} + \mathcal{O}(\mathrm{Kn}^2)\\
    &= \rho \eta^{(0)} {-} \, \gamma \mathrm{Kn} \int_{\mathbb{R}^3} \rev{k_B} f^{(1)} d\vec{v} {+} \frac{{3}\mathrm{Kn}}{{4}e}\!\int_{\mathbb{R}^3}\!\! k_B \, \abs{\vec{w}}^2 f^{(1)} d\vec{v}
    + \mathcal{O}(\mathrm{Kn}^2)\\
    &{= \rho \eta^{(0)} + \mathcal{O}(\mathrm{Kn}^2)},   
\end{align}
where \rev{we used \eqref{TTT} and the fact that} the last two integrals vanish due to the solvability condition \eqref{eq:solvability}.
\end{proof}

\begin{proposition}
\label{prop:evenness_entorpy_flux}
Under the solvability condition \eqref{eq:solvability}, the entropy  \rev{ and energy fluxes $\vec{\Phi}^{(k)}$ and $\vec{Q}^{k}$, $k=0,1$, satisfy} 
\begin{equation}\label{TTT2}
\rev{\vec{\Phi}^{(0)} = \frac32 \frac{R_s}{e}\vec{Q}^{(0)} \quad \textrm{ and } \quad \vec{\Phi}^{(1)}=\frac32 \frac{R_s}{e}\vec{Q}^{(1)} + \mathcal{O}(\mathrm{Kn}^2).}
\end{equation}
\end{proposition}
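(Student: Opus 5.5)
The plan is to mirror the proof of \Cref{proposition-entropy}. We expand $\log$ of the truncated Chapman--Enskog series, substitute the explicit form \eqref{TTT} of $\log f^{(0)}$, and use the fact that the odd moments of the Maxwellian in $\vec{w}$ vanish together with the solvability condition \eqref{eq:solvability}.

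For $k=0$ we insert $\log f^{(0)} = \gamma - \frac{3}{4e}\abs{\vec{w}}^2$ into the definition of $\vec{\Phi}^{(0)}$:
\begin{equation*}
\vec{\Phi}^{(0)} = -k_B\gamma\int_{\mathbb{R}^3}\vec{w} f^{(0)}\,d\vec{v} + \frac{3k_B}{4e}\int_{\mathbb{R}^3}\abs{\vec{w}}^2\vec{w} f^{(0)}\,d\vec{v}.
\end{equation*}
The first integral vanishes because $\int \vec{w} f\,d\vec{v}=0$, which follows from \eqref{eq:matching_moments}. In the second term we write $k_B = m R_s$, so that
\begin{equation*}
\frac{3k_B}{4e}\int_{\mathbb{R}^3}\abs{\vec{w}}^2\vec{w} f^{(0)}\,d\vec{v} = \frac{3R_s}{2e}\int_{\mathbb{R}^3}\frac{m}{2}\abs{\vec{w}}^2\vec{w} f^{(0)}\,d\vec{v} = \frac32\frac{R_s}{e}\vec{Q}^{(0)}.
\end{equation*}
This gives the first identity exactly, with no remainder. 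We may also note that both sides are in fact zero by the oddness of the Maxwellian in $\vec{w}$, but this observation is not needed.

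For $k=1$ we use the Taylor expansion \eqref{TTT1} and get
\begin{equation*}
(f^{(0)}+\mathrm{Kn} f^{(1)})\log(f^{(0)}+\mathrm{Kn} f^{(1)}) = f^{(0)}\log f^{(0)} + \mathrm{Kn}\bigl(f^{(1)}\log f^{(0)} + f^{(1)}\bigr) + \mathcal{O}(\mathrm{Kn}^2).
\end{equation*}
Multiplying by $-k_B\vec{w}$ and integrating gives three contributions.
\begin{itemize}
\item The zeroth-order term is $\vec{\Phi}^{(0)}$, which was handled above.
\item The term $\mathrm{Kn}\int \vec{w} f^{(1)}\,d\vec{v}$ vanishes by the remark following \eqref{eq:solvability}.
\item The term $-\mathrm{Kn}\,k_B\int \vec{w} f^{(1)}\log f^{(0)}\,d\vec{v}$ becomes, after substituting \eqref{TTT}, $-\mathrm{Kn}\,k_B\gamma\int\vec{w}f^{(1)}\,d\vec{v} + \mathrm{Kn}\frac{3k_B}{4e}\int\abs{\vec{w}}^2\vec{w}f^{(1)}\,d\vec{v}$. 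The first piece is again zero, and the second equals $\frac32\frac{R_s}{e}\,\mathrm{Kn}\int\frac m2\abs{\vec{w}}^2\vec{w} f^{(1)}\,d\vec{v}$.
\end{itemize}
Adding the $k=0$ result to this last piece gives exactly $\frac32\frac{R_s}{e}\vec{Q}^{(1)}$, because $\vec{Q}^{(1)}$ is linear in the distribution. This leaves only the $\mathcal{O}(\mathrm{Kn}^2)$ remainder.

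The main obstacle is making the $\mathcal{O}(\mathrm{Kn}^2)$ remainder rigorous. The Taylor remainder of $g\log g$ is of size $\mathrm{Kn}^2 (f^{(1)})^2/f^{(0)}$, and after multiplication by $\vec{w}$ it must be integrable. I would assume, formally as the paper does elsewhere, that $f^{(1)}/f^{(0)}$ grows at most polynomially in $\vec{w}$; this is true for the explicit BGK first-order correction. Under that assumption the weighted integral $\int\abs{\vec{w}}(f^{(1)})^2/f^{(0)}\,d\vec{v}$ is finite, and the remainder is genuinely of order $\mathrm{Kn}^2$. We also need that $e$, and hence $\gamma$, is expansion-independent, which is given by the earlier proposition; this justifies using the same $\gamma$ and $e$ at every order.
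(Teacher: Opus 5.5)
Your proposal is correct and follows the same route as the paper. You substitute \eqref{TTT} into the definition of $\vec{\Phi}^{(k)}$, eliminate the $\gamma$-terms and the $\mathrm{Kn}\int \vec{w} f^{(1)}\,d\vec{v}$ term via \eqref{eq:matching_moments} and \eqref{eq:solvability}, and recombine the cubic moments into $\vec{Q}^{(k)}$ by linearity using $k_B = mR_s$.

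Your closing rigor paragraph goes beyond the paper and is too optimistic. Because $f^{(1)}/f^{(0)}$ is cubic in $\vec{w}$, the truncation $f^{(0)}+\mathrm{Kn}\, f^{(1)}$ becomes negative for large $\abs{\vec{w}}$, where the logarithm is undefined and polynomial growth gives no control of the remainder, so the statement should be read, as in the paper, as a formal expansion in $\mathrm{Kn}$.
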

\begin{proof}
We \rev{start from the definitions of $\vec{\Phi}^{(k)}$, $k=0,1$, and use \eqref{TTT} and \eqref{eq:solvability}. This yields}
\begin{subequations}
\begin{align}
    \rev{\vec{\Phi}^{(0)}} &= \rev{{-}{k_B} \int_{\mathbb{R}^3} \vec{w} f^{(0)} \log\left( f^{(0)}\right) d\vec{v}} \\ &= \rev{{-}{k_B} \gamma \int_{\mathbb{R}^3} \vec{w} f^{(0)} d\vec{v}  + \frac{3k_B}{2e} \int_{\mathbb{R}^3} \vec{w} \frac{|\vec{w}|^2}{2} f^{(0)} d\vec{v}} \\
    &= \rev{\frac{3k_B}{2me} \int_{\mathbb{R}^3} m\,\vec{w} \frac{|\vec{w}|^2}{2} f^{(0)} d\vec{v} = \frac{3R_s}{2e} \vec{Q}^{(0)}},\label{TTT3}
\end{align}
\rev{and, using also \eqref{TTT1},}
\begin{align}
    \rev{\vec{\Phi}^{(1)}} &= \rev{\int_{\mathbb{R}^3} {-}{k_B}\vec{w} \left( f^{(0)} + \mathrm{Kn}\, f^{(1)} \right) \log\left( f^{(0)} + \mathrm{Kn}\, f^{(1)} \right) d\vec{v}}, \\
    &= \int_{\mathbb{R}^3} {-}{k_B}\vec{w} \left( f^{(0)} + \mathrm{Kn}\, f^{(1)} \right) \left[ \log(f^{(0)}) + \mathrm{Kn} \frac{f^{(1)}}{f^{(0)}} +\mathcal{O}(\mathrm{Kn}^2)\right]\,d\vec{v}\\
    &= \vec{\Phi}^{(0)} {-} \mathrm{Kn} \int_{\mathbb{R}^3} {k_B}\vec{w} f^{(1)} \log(f^{(0)}) \,d\vec{v} {-} \mathrm{Kn}\int_{\mathbb{R}^3} {k_B}\vec{w} f^{(1)} \,d\vec{v} + \mathcal{O}(\mathrm{Kn}^2) \\
    &= \rev{\vec{\Phi}^{(0)} {-} \mathrm{Kn} \int_{\mathbb{R}^3} {k_B}\vec{w} f^{(1)} \left[\gamma - \frac{|\vec{w}|^2}{\frac43 e}\right] \,d\vec{v} + \mathcal{O}(\mathrm{Kn}^2)} \\
    &= \rev{\vec{\Phi}^{(0)} + \mathrm{Kn} \frac{3k_B}{2me} \int_{\mathbb{R}^3} m\, \vec{w} \frac{|\vec{w}|^2}{2} f^{(1)} \,d\vec{v} + \mathcal{O}(\mathrm{Kn}^2)} \\
    &=\rev{\frac{3R_s}{2e} \int_{\mathbb{R}^3} m\, \vec{w} \frac{|\vec{w}|^2}{2} \left(f^{(0)} + \mathrm{Kn}\, f^{(1)}\right) \,d\vec{v}} \\
    &= \rev{\frac{3R_s}{2e} \vec{Q}^{(1)}} + \mathcal{O}(\mathrm{Kn}^2),
\end{align}
\end{subequations}
where \rev{we employed \eqref{TTT3} and again used} the solvability condition \eqref{eq:solvability}.
\end{proof}

\subsection{Constitutive equation for entropy}
\rev{Using \eqref{TTT}}, we can compute the entropy per unit mass at equilibrium $\eta^{(0)}(\vec{x},t)$ using only $f^{(0)}(\vec{x},\vec{v},t)$, i.e.
\begin{equation}
	\label{eq:entropy0}
	\rho \eta^{(0)} = \int_{\mathbb{R}^3}\uze{-} {k_B}f^{(0)}(\vec{x},\vec{v},t) \log f^{(0)}(\vec{x},\vec{v},t) d\vec{v} = {R_s \rho \left[ \frac{3}{2} \log\left( {\frac{2}{3}}e \rho^{-\frac{2}{3}} \right) + C\right]},
\end{equation}
{where $C = \frac{3}{2}(\log{2\pi}+1) + \log m$} and we recall that $R_s$ is the specific gas constant.
Thus we have that $\eta^{(1)}=\eta^{(0)}=\eta^{(0)}(\rho,e)$ is a caloric equation of state, i.e.~it depends only on the macroscopic variables $\rho$ and $e$, and we can invert the relation to obtain that $e = e(\rho,\eta^{(0)})=e(\rho,\eta^{(1)})$. In particular, dropping $\mathcal{O}(\mathrm{Kn}^2)$ terms and redefining $C$, we have that
\begin{equation}
    \label{eq:eos0}
    e(\rho,\eta^{(0)}) = {\frac{3}{2}} \rho^{\frac{2}{3}} \exp\left( \frac{2}{3 R_s} \left( {\eta^{(0)}}-C \right) \right) = {\frac{3}{2}} \rho^{\frac{2}{3}} \exp\left( \frac{{\color{black}2}}{3 R_s} \left( {\eta^{(1)}}-C \right) \right) = e(\rho,\eta^{(1)}).
\end{equation}

\subsection{Thermodynamic temperature and caloric equation of state}
The thermodynamic definition of temperature $\theta(\vec{x},t)$ in Kelvin is given by
\begin{equation}
    {\theta \coloneqq \frac{\partial e}{\partial \eta^{(0)}}}
\end{equation}
where once again we have dropped $\mathcal{O}(\mathrm{Kn}^2)$ terms.
The temperature $T$ in units of energy per mass is related to the temperature $\theta$ in Kelvin by the specific gas constant ${R_s}$:
\begin{equation}
T = {R_s} \theta.
\end{equation}
Using \eqref{eq:eos0} we can derive the equipartition of energy \eqref{eq:equipartition_of_energy}. Differentiating the internal energy with respect to entropy yields
\begin{equation} \label{eq:thermal_equation_of_state}
    \theta = \frac{{4}}{9 R_s}\rho^{\frac{2}{3}}\exp\left(\frac{2}{3 R_s}({\eta^{(0)}}-C)\right)=\frac{2}{3 R_s}e \Rightarrow e = \frac{3}{2} T = \frac{3}{2}{R_s}\theta.
\end{equation}
This result, i.e.~$e=\frac{3}{2}R_s\theta$, is the caloric equation of state for a monatomic ideal gas.
The equipartition of energy theorem for monatomic gases can thus be derived as a consequence of the Chapman--Enskog expansion.

\subsection{Thermodynamic pressure and the thermal equation of state (ideal gas law)}
The thermodynamic definition of pressure $p(\vec{x},t)$ is given by
\begin{equation}
    \label{eq:termopressure}
    p \coloneqq \rho^2 \frac{\partial e(\rho, \eta^{(0)})}{\partial \rho}.
\end{equation}
Using \eqref{eq:eos0} and \eqref{eq:thermal_equation_of_state} we can express the pressure in terms of $\rho$ and $\theta$, as
\begin{equation}\label{eq:idealgas}
    p = \rho^2 \frac{\partial}{\partial \rho}\left[\frac{{3}}{2} \rho^{\frac{2}{3}} \exp\left( \frac{2}{3 R_s} \left( \eta^{(0)}-C \right) \right)\right] = \frac{2}{3} e\rho = {R_s}\theta \rho.
\end{equation}
This last formula is the well known ideal gas law. 
\begin{remark}
    Since $\rho$ is expansion-independent, from \eqref{eq:termopressure} we see that also the pressure is expansion-independent.
    This fact is reconfirmed by the above calculation where the pressure is expressed in terms of $\rho$ and $\theta$, both of which are expansion-independent, and thus the pressure is expansion-independent as well.
\end{remark}

Rajagopal \cite{rajagopalPressure} observed that the term pressure is misused in the literature, in that it is ambiguously used to refer both to the thermodynamic pressure and to the mechanical pressure.
While the thermodynamic pressure is defined in \eqref{eq:termopressure}, the mechanical pressure $\hat{p}$ is defined as the spherical component of the stress tensor:
\begin{equation}
    \mathbb{T} = \mathbb{T}^{d}+\hat{p}\mathbb{I},
\end{equation}
where we used the superscript $d$ to indicate the deviatoric part of the Cauchy stress tensor, i.e.~$\cdot^{d}\coloneqq \cdot - {\frac{1}{3}}\text{tr}(\cdot)\mathbb{I}$, and thus
\begin{equation}
    \hat{p}= \frac{1}{3}\text{tr}(\mathbb{T}).
\end{equation}
A well known limitation of the kinetic theory of rarefied gases is precisely the fact that the kinetic pressure and the mechanical pressure coincide up to sign \cite[Chapter 4]{Truesdell}, i.e.~$p=-\hat{p}$. Indeed, using the definitions \eqref{eq:newj3} and \eqref{def:energy} of $\mathbb{T}$ and $e$ together with the state equation \eqref{eq:idealgas}, we observe that
\begin{align}
	\label{eq:stokes_hypothesis_calculation}
	\hat{p} = \frac{\text{tr}}{3}\left[{-}\int_{\mathbb{R}^3} m\vec{w}\otimes\vec{w}f\,d\vec{v}\right] &= {-}\frac{1}{3}\int_{\mathbb{R}^3} m\,\text{tr}(\vec{w}\otimes\vec{w})f\,d\vec{v}\\
    &={-}\frac{1}{3} \int_{\mathbb{R}^3}m\,\abs{\vec{w}}^2 f\,d\vec{v} = {-}\frac{2}{3}\rho e = {-}p.
\end{align}
This observation is often referred to as the Stokes hypothesis and in continuum mechanics it is often expressed by introducing the concept of bulk viscosity \cite{rajagopalPressure, Stokes}.
If we consider the general class of {compressible} Newtonian fluids where the stress tensor is given by
\begin{equation}
    \mathbb{T} = {-p\mathbb{I}} + \nu \left( \nabla_{\vec{x}} \vec{u} + (\nabla_{\vec{x}} \vec{u})^T - \frac{2}{3} (\nabla_{\vec{x}} \cdot \vec{u}) \mathbb{I} \right) + \zeta (\nabla_{\vec{x}} \cdot \vec{u}) \mathbb{I},
\end{equation}
where $\nu \ge 0$ is the dynamic viscosity and $\zeta \ge 0$ is the bulk viscosity, then the identity $p=-\hat{p}$ holds if $\zeta=0$.

\begin{remark}
	\rev{
        In the case of monatomic ideal gases considered here, we have just shown that the kinetic pressure and the mechanical pressure coincide up to sign. For more complex fluids, such as polyatomic gases, the situation is more involved. For example, Curtiss' model for polyatomic gases \cite{curtissI}, which will be briefly discussed in \Cref{sec:LE}, accounts for both translational and rotational degrees of freedom of top-symmetric molecules representing the polyatomic gas constituents. In this model the total internal energy splits as $e = e_{tr} + e_{rot}$, where the translational and rotational contributions satisfy $e_{tr} = \frac{3}{2}R_s\theta_{tr}$ and $e_{rot} = \frac{N-3}{2}R_s\theta_{rot}$ respectively, with $\theta_{tr}$ and $\theta_{rot}$ the corresponding partial temperatures and $N$ the total number of active degrees of freedom.
        At thermodynamic equilibrium both $\theta_{tr}$ and $\theta_{rot}$ coincide with the single thermodynamic temperature $\theta$, and the ideal gas law $p = \rho R_s \theta$ holds unchanged.
	The mechanical pressure, defined as $\hat{p} = \frac{1}{3}\mathrm{tr}(\mathbb{T})$, depends only on $\vec{w}\otimes\vec{w}$ and is given by $\hat{p} = -\frac{2}{3}\rho e_{tr}$.
        At equilibrium $\theta_{tr} = \theta$ and hence $\hat{p} = -\rho R_s\theta = -p$, so kinetic and mechanical pressures still coincide up to sign.
        Out of equilibrium, however, $\theta_{tr} \neq \theta$ and consequently $\hat{p} \neq -p$, the discrepancy being controlled by the bulk viscosity $\zeta$.}

        \rev{It is important to stress that it is not merely the existence of internal degrees of freedom that leads to a non-vanishing bulk viscosity, but rather the fact that the rotational degrees of freedom relax to equilibrium at a different rate than the translational ones, creating a lag in the equilibration of the two partial temperatures. This is what generates a non-zero bulk viscosity $\zeta$. 
	This mechanism becomes transparent in the BGK approximation for polyatomic gases of Andries, Aoki and Perthame \cite{PerthameBGK}, where the following limiting scenarios can be identified:}
	\begin{enumerate}
            \item \rev{If the internal degrees of freedom and translational degrees of freedom relax at the same rate, i.e.~$\tau_{int} = \tau_{tr}$, the model reduces to a single-species BGK equation, giving $\zeta = 0$ exactly, and thermodynamical and mechanical pressures coincide up to sign.}
            \item \rev{If $\tau_{int} > \tau_{tr}$, i.e.~rotational modes relax more slowly as is the case for physical polyatomic gases, then $\zeta > 0$ and kinetic and mechanical pressure no longer coincide up to sign.}
                 \rev{The larger the ratio $\tau_{int}/\tau_{tr}$, the more pronounced the bulk viscosity.}	
        \end{enumerate}
\end{remark}

\subsection{Zeroth and first order entropy flux and entropy production: Clausius--Duhem inequality}
\begin{theorem}
    \label{prop:RSConstraint}
    Under the assumption that the Chapman--Enskog expansion can be truncated at first order, i.e.,
    \begin{equation}\label{eq:newj12} 
    f(\vec{x},\vec{v},t) \approx f^{(0)}(\vec{x},\vec{v},t) + \mathrm{Kn} \, f^{(1)}(\vec{x},\vec{v},t)
    \end{equation} 
    then, for $k=0,1$, 
    \begin{align}\label{eq:newj13}
        \rho \left(\frac{\partial \eta^{(k)}}{\partial t} + \vec{u} \cdot \nabla_{\vec{x}}  \eta^{(k)} \right) + \nabla_{\vec{x}} \cdot \vec{\Phi}^{(k)} = \xi^{(k)},
    \end{align}
    where
    \begin{equation}
        \label{eq:newj14}
        {\vec{\Phi}}^{(k)} = \frac{\vec{Q}{^{(k)}}}{{\theta}} \quad \textrm{ and } \quad {\theta} \xi^{(k)} = ({\mathbb{T}^{(k)}} + p \mathbb{I}): \mathbb{D} - \vec{Q}{^{(k)}} \cdot \frac{\nabla_{\vec{x}} {\theta}}{{\theta}} ,
    \end{equation}
{with $\mathbb{D}$ denoting the symmetric part of $\nabla_{\vec{x}}\vec{u}$}.
\end{theorem}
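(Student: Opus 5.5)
The balance law \eqref{eq:newj13} is the exact entropy balance \eqref{eq:almostClausiusDuhem}, evaluated on the truncation $f^{(0)}+\mathrm{Kn}\,f^{(1)}$ and with $\mathcal{O}(\mathrm{Kn}^2)$ terms dropped. So I would treat $\xi^{(k)}$ as defined by this balance. The substantive claims are then the flux identity and the explicit form of $\theta\xi^{(k)}$.

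\textbf{Flux identity.} This follows from Proposition~\ref{prop:evenness_entorpy_flux} together with the caloric equation of state \eqref{eq:thermal_equation_of_state}. That relation gives $\frac{3}{2}\frac{R_s}{e}=\frac{1}{\theta}$, so \eqref{TTT2} reads $\vec{\Phi}^{(k)}=\vec{Q}^{(k)}/\theta$ for $k=0,1$ (up to $\mathcal{O}(\mathrm{Kn}^2)$).

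\textbf{Entropy production.} I would use a Gibbs-relation argument rather than a direct kinetic computation of $\xi^{(k)}$. By Proposition~\ref{proposition-entropy}, $\eta^{(k)}=\eta^{(0)}(\rho,e)$ is given by \eqref{eq:entropy0}. The definitions of temperature and pressure, $\theta=\partial e/\partial\eta$ and $p=\rho^2\partial e/\partial\rho$, then give
\[
\theta\,\dot\eta^{(k)}=\dot e-\frac{p}{\rho^2}\dot\rho ,
\]
with $\dot{(\cdot)}=\partial_t+\vec u\cdot\nabla_{\vec x}$. One can also verify this directly from \eqref{eq:entropy0}, \eqref{eq:thermal_equation_of_state} and \eqref{eq:idealgas}. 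Next I substitute the mass balance, $\dot\rho=-\rho\,\nabla_{\vec x}\cdot\vec u$, and the internal energy balance \eqref{eq:evolution_internal_energy}. These balances are exact and hold for the truncated moments $\mathbb{T}^{(k)}$ and $\vec Q^{(k)}$ because of expansion independence of $\rho,\vec u,e$. Using also $\nabla_{\vec x}\cdot\vec u=\mathbb{I}:\mathbb{D}$ and the symmetry of $\mathbb{T}^{(k)}$, I obtain
\[
\rho\theta\dot\eta^{(k)}=(\mathbb{T}^{(k)}+p\mathbb{I}):\mathbb{D}-\nabla_{\vec x}\cdot\vec Q^{(k)} .
\]
Dividing by $\theta$ and writing $\frac{1}{\theta}\nabla_{\vec x}\cdot\vec Q=\nabla_{\vec x}\cdot(\vec Q/\theta)+\vec Q\cdot\frac{\nabla_{\vec x}\theta}{\theta^2}$ puts this in the form \eqref{eq:newj13}, with flux $\vec Q^{(k)}/\theta$ and the claimed expression \eqref{eq:newj14} for $\theta\xi^{(k)}$.

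\textbf{Main obstacle: consistency.} The remaining point is to show that the $\xi^{(k)}$ obtained this way agrees with the kinetic entropy production \eqref{eq:BGK_entropy_production} evaluated on the truncation, up to $\mathcal{O}(\mathrm{Kn}^2)$. I would argue as follows. The exact kinetic entropy balance \eqref{eq:almostClausiusDuhem} has the same left-hand side, up to the $\mathcal{O}(\mathrm{Kn}^2)$ errors in Propositions~\ref{proposition-entropy} and~\ref{prop:evenness_entorpy_flux}, and the mass and energy balances are exact. Hence the two right-hand sides must coincide to the same order. The delicate bookkeeping is that the time derivatives under the multiscale splitting $\partial_{t_0}+\mathrm{Kn}\,\partial_{t_1}$ do not spoil these orders. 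For $k=0$ the computation is exact with $\mathbb{T}^{(0)}=-p\mathbb{I}$ (by \eqref{eq:stokes_hypothesis_calculation} and isotropy of $f^{(0)}$) and $\vec Q^{(0)}=0$, which gives $\xi^{(0)}=0$. This is consistent with the H-theorem computation, since $f=f^{(0)}$ makes \eqref{eq:expression_for_xi} vanish.
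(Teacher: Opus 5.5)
Your core argument is exactly the paper's proof: the Gibbs relation $\theta\dot\eta^{(k)}=\dot e-\frac{p}{\rho^2}\dot\rho$ obtained from $e=e(\rho,\eta^{(k)})$, then the mass and internal-energy balances, the symmetry of $\mathbb{T}^{(k)}$, and division by $\theta$ with the product rule. The paper simply reads off $\vec\Phi^{(k)}=\vec Q^{(k)}/\theta$ and $\xi^{(k)}$ from the rearranged identity, whereas your use of Proposition~\ref{prop:evenness_entorpy_flux} with $e=\frac32 R_s\theta$, and your consistency check against the kinetic $\xi$, tie that identification to the kinetic definitions, which the paper leaves implicit. One small correction: inserting $\mathbb{T}^{(k)}$ and $\vec Q^{(k)}$ into \eqref{eq:evolution_internal_energy} is licensed by the truncation hypothesis \eqref{eq:newj12}, not by the expansion independence of $\rho,\vec u,e$, because the fluxes themselves are not expansion-independent, so that balance holds only up to the truncation error.
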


\begin{proof}
Under the assumption \eqref{eq:newj12}, it follows from \eqref{eq:eos0} that $e = e(\rho, \eta^{(k)})$, $k=0,1$. Applying the material time derivative $\frac{d\cdot}{dt}:= \frac{\partial \cdot}{\partial t} + \vec{u}\cdot \nabla_{\vec{x}} \cdot$ to both sides of these equations we obtain 
\begin{equation}
	\frac{de}{dt}  = \frac{d}{dt} e(\rho,\eta^{(k)}) = \frac{\partial e}{\partial \rho} \frac{d\rho}{dt} + \frac{\partial e}{\partial \eta^{(k)}} \frac{d\eta^{(k)}}{dt}= \frac{p}{\rho^2}\frac{d\rho}{dt} {+} {\theta} \frac{d\eta^{(k)}}{dt},\qquad k=0,1,
\end{equation}
where we have used the thermodynamic definition of temperature and pressure.  A simple manipulation followed by the use of balance equations \eqref{eq:evolution_internal_energy} and \eqref{eq:balance_law_mass} leads to 
\begin{align}
	{\theta} \rho \frac{d\eta^{(k)}}{dt} = \rho \frac{de}{dt}  - \frac{p}{\rho}\frac{d\rho}{dt} &= {\mathbb{T}^{(k)}}: \nabla_{\vec{x}} \vec{u} - \nabla_{\vec{x}} \cdot {\vec{Q}^{(k)}} + p \nabla_{\vec{x}} \cdot\vec{u} \\ &= ({\mathbb{T}^{(k)}} + p \mathbb{I}): \nabla_{\vec{x}} \vec{u} - \nabla_{\vec{x}} \cdot {\vec{Q}^{(k)}},\qquad k=0,1. 
\end{align}
After multiplying by ${\frac{1}{\theta}}$ and rearranging the terms, we get 
\begin{align}
	\rho \frac{d\eta^{(k)}}{dt} + \nabla_{\vec{x}} \cdot \left(\frac{\vec{Q}{^{(k)}}}{{\theta}}\right) = \frac{1}{{\theta}} \left[({\mathbb{T}^{(k)}} + p \mathbb{I}): \nabla_{\vec{x}} \vec{u} - \vec{Q}{^{(k)}} \cdot \frac{{\nabla_{\vec{x}} \theta}}{{\theta}} \right],\qquad k=0,1, 
\end{align}
which gives the assertion {(after employing the symmetry of $\mathbb{T}$)}. 
\end{proof}
\begin{remark}
	We have thus shown that sufficiently close to thermodynamic equilibrium, i.e.~when the Chapman--Enskog expansion can be truncated at the zeroth or first order, the entropy flux vector $\vec{\Phi}^{(k)}$ reduces to the classical form from continuum thermodynamics $\vec{\Phi}^{(k)} = \vec{Q}{^{(k)}}/{\theta}$,and thus the entropy balance law \eqref{eq:almostClausiusDuhem} reduces to the classical Clausius--Duhem inequality.
    This also suggests that far from thermodynamic equilibrium the entropy flux vector might need be modified to take into account non-equilibrium effects.
\end{remark}
\section{Euler and Navier--Stokes--Fourier equations}
By considering zero and first order approximations of the Chapman--Enskog expansion respectively, we now proceed to derive the constitutive relations for Euler and Navier--Stokes--Fourier fluids.
The procedure outlined here is different from the classical Chapman--Enskog procedure, yet obtains the same constitutive relations.
In \Cref{sec:LE} we will present a situation where the procedure outlined here and the classical Chapman--Enskog procedure yield different results.
\subsection{Zeroth order approximation: Euler equations}
Considering only the zeroth-order approximation of the Chapman--Enskog expansion, i.e.,
\begin{equation}
    \label{eq:zerothOrderCE}
    f(\vec{x},\vec{v},t) \approx f^{(0)}(\vec{x},\vec{v},t),
\end{equation}
then $\xi^{(0)}\equiv 0$, which trivially follows from \eqref{eq:BGK_entropy_production}, by substituting $f^{(0)}$ instead of $f$.
It then follows from the second equation in \eqref{eq:newj14}, that 
\begin{equation}\label{euler}
            \mathbb{T}{^{(0)}} = - {p} \mathbb{I},\quad \vec{Q}{^{(0)}} = 0,
\end{equation}
is an admissible choice of constitutive relations for the Cauchy stress tensor and the heat flux vector, recovering the well-known constitutive relations for compressible Euler (ideal) fluids.

However one can construct many different constitutive relations for $\mathbb{T}{^{(0)}}$ and $\vec{Q}{^{(0)}}$ that satisfy the second equation in \eqref{eq:newj14} with $\xi^{(0)}\equiv 0$.
For example one can consider
\begin{equation}
    \label{eq:counterexample}
    \mathbb{T}{^{(0)}} = - {p} \mathbb{I} - \alpha \nabla_{\vec{x}} \theta \otimes \nabla_{\vec{x}} \theta, \quad \vec{Q}{^{(0)}} = \rev{-}\alpha \rev{\theta}\mathbb{D}\nabla_{\vec{x}} \theta,
\end{equation}
where $\alpha \in \mathbb{R}$ and $\mathbb{D}$ is the symmetric part of $\nabla_{\vec{x}} \vec{u}$.
Thus a selection criterion has to be invoked to select the constitutive relations \eqref{euler} over the ones in \eqref{eq:counterexample}.

\begin{criterion}[\rev{\textbf{Linear irreversible thermodynamic procedure}}]
    \label{selec:NC}
    Under the \rev{linear irreversible thermodynamic} assumption that the constitutive relations are piecewise linear relationships between thermodynamic fluxes $\{\mathbb{T}+p\mathbb{I}, \vec{Q}\}$ and thermodynamic affinities $\{\rev{\mathbb{D}^d},\rev{-}\nabla_{\vec{x}} {\theta}/{\theta}\}$ \cite{deGrootMazur}, we can invoke the rational thermodynamics idea \cite{RationalThermodynamics} that \eqref{eq:newj14} should hold for any choice of affinities, and thus we can uniquely select the constitutive relations \eqref{euler}.
\end{criterion}

\begin{criterion}[\textbf{Classical Chapman--Enskog procedure}]
\label{selec:CE}
At this level, it is worth observing that \eqref{euler} can be obtained directly by inserting $f^{(0)}$ in the definition of $\vec{Q}$ and $\mathbb{T}$, as per the classical Chapman--Enskog procedure.
Indeed, using the properties of the Maxwellian and the fact that the mechanical pressure is equal to the thermodynamical pressure, we obtain 
\begin{equation}
    \vec{Q}{^{(0)}} = \int_{\mathbb{R}^3} \frac{1}{2} m \abs{\vec{w}}^2 \vec{w} f^{(0)}(\vec{x},\vec{v},t) \,d\vec{w} = 0 \implies  \vec{\Phi}{^{(0)}}(\vec{x},t) = 0,
\end{equation}
by a symmetry argument, since the Maxwellian distribution is an even function of $\vec{w}$, and
\begin{align}
    \mathbb{T}{^{(0)}} &\!={-}\int_{\mathbb{R}^3}\!\!m \vec{w} \otimes \vec{w} f^{(0)} \,d\vec{w}\!=\!\! {-}\int_{\mathbb{R}^3}\!\! m \frac{1}{3} \abs{\vec{w}}^2 f^{(0)} {\mathbb{I}}\,d\vec{w} \!=\! 
    -p \mathbb{I},
\end{align}
where the last identity comes from the fact that $\mathbb{E}[\vec{w}\otimes \vec{w}]$ is the variance of the Maxwellian distribution which, since \eqref{eq:maxwellian} is a normal distribution with diagonal covariance matrix, can be computed as $\mathbb{E}[\frac{1}{3}\abs{\vec{w}}^2 \mathbb{I}]$.
\end{criterion}
\begin{remark}
    \label{selec:KT}
    One can adopt a hybrid procedure where information from the kinetic theory is used to restrict the class of admissible constitutive relations. 
    For example, if we limit ourselves to constitutive relations of the form of \eqref{eq:counterexample}, the pressure could also play the role of a selection criterion.
    In fact starting from the thermodynamic definition of pressure \eqref{eq:termopressure} together with the equation of state \eqref{eq:eos0} we have that the mechanical pressure $\hat{p}$ and the thermodynamic pressure $p$ coincide up to second order in Knudsen number.
    Thus using \eqref{eq:counterexample} we have that 
    \begin{equation}
            {-\frac{\alpha}{3} \abs{\nabla {\theta}}^2 = \hat{p} + p =  \mathcal{O}(\mathrm{Kn}^2)},
    \end{equation}
    implying that $\alpha$ must be of order $\mathcal{O}(\mathrm{Kn}^2)$ and so the alternative constitutive closure \eqref{eq:counterexample} reduces to \eqref{euler} up to first order in Knudsen number.
    We remark that the pressure based selection criterion is less general than other selection criteria we will present and it relies on the Stokes hypothesis, i.e.~that there is no bulk viscosity $\zeta$.
    For example, if we consider the alternative constitutive closure
    \begin{equation}
	    \mathbb{T}{^{(0)}} = - {p} \mathbb{I} - \alpha \nabla_{\vec{x}} \theta \otimes \nabla_{\vec{x}} \theta + \frac{\alpha}{3}\abs{\nabla \theta}^2 \mathbb{I}, \quad \vec{Q}{^{(0)}} = \rev{-}\alpha \rev{\theta}\mathbb{D}^d \nabla_{\vec{x}} \theta,
    \end{equation}
    with $\mathbb{D}^d$ the deviatoric part of $\mathbb{D}$, then the pressure no longer plays a role in the selection criterion.
\end{remark}

\subsection{First order approximation: Navier--Stokes equations}
\label{sec:FirstOrder}
We continue with the first order approximation of the Chapman--Enskog expansion \eqref{eq:newj12}.

To compute the entropy production rate at first order we begin by computing $f^{(1)}(\vec{x},\vec{v},t)$ using the classical Chapman--Enskog procedure, i.e.~we insert \eqref{eq:newj12} into \eqref{eq:BoltzmannMultipleScales} to obtain
\begin{align}
    \left(\partial_{t_0} + \mathrm{Kn} \,\partial_{t_1} \right) \left( f^{(0)} + \mathrm{Kn} \,f^{(1)} \right) + \vec{v} \cdot \nabla_{\vec{x}} \left( f^{(0)} + \mathrm{Kn} f^{(1)} \right) = -\frac{1}{\tilde{\tau}} f^{(1)},
\end{align}
where $\mathrm{Kn}\,\tilde{\tau} = \tau$ and $\tilde{\tau}$ is a rescaled relaxation time independent of the Knudsen number.
We then collect terms at different orders of the Knudsen number.
Starting from zeroth-order we have
\begin{equation}
    \partial_{t_0} f^{(0)} + \vec{v} \cdot \nabla_{\vec{x}} f^{(0)} = -\frac{1}{\tilde{\tau}} f^{(1)},
\end{equation}
which can be inverted to obtain $f^{(1)}(\vec{x},\vec{v},t)$ as
\begin{equation}
    \label{eq:ce_starting}
    f^{(1)}(\vec{x},\vec{v},t) = -\tilde{\tau} \left( \partial_{t_0} f^{(0)} + \vec{v} \cdot \nabla_{\vec{x}} f^{(0)} \right).
\end{equation}
Together with the solvability condition \eqref{eq:solvability} this relation allows us to compute $f^{(1)}(\vec{x},\vec{v},t)$ explicitly in terms of the macroscopic quantities $\rho(\vec{x},t)$, $\vec{u}(\vec{x},t)$, and $e(\vec{x},t)$ and their spatial derivatives \cite{Chapman,Harris,cercignani}, i.e.
\begin{equation}
    \label{eq:first_order_correction}
    f^{(1)}(\vec{x},\vec{v},t) = -\tilde{\tau} f^{(0)}\left[
    \frac{3}{2}\frac{(\vec{w}\otimes \vec{w})^d}{e}:\mathbb{D}^d \rev{+} \left(\frac{3}{4}{\frac{\abs{\vec{w}}^2}{e}} - \frac{5}{2} \right) \vec{w}\cdot \frac{\nabla_{\vec{x}} e}{e} 
    \right]. 
\end{equation}
While the detailed calculations of how to compute $f^{(1)}$ are standard in the literature \cite{cercignani,Chapman,Harris}, for the sake of completeness they are presented in \ref{appendix_CE}.
Using this expression for $f^{(1)}(\vec{x},\vec{v},t)$ we can compute the entropy production rate at first order as
\begin{align}
    \xi^{(1)}(\vec{x},t) &= \frac{ \rev{\mathrm{Kn}}}{\tau}\int_{\mathbb{R}^3} {k_B} f^{(1)}\log\left( f^{(0)} + \mathrm{Kn} f^{(1)}\right) \,d\vec{v} \\
    &\approx \frac{\mathrm{Kn}}{\tau}\int_{\mathbb{R}^3}{k_B} f^{(1)} \left[ \log(f^{(0)}) + \mathrm{Kn} \frac{f^{(1)}}{f^{(0)}} \right] d\vec{v} + \mathcal{O}(\mathrm{Kn}^3) \label{eq:aux}\\
    &{= \frac{\mathrm{Kn}^2}{\tau} \int_{\mathbb{R}^3}k_B \frac{(f^{(1)})^2}{f^{(0)}} d\vec{v} + \mathcal{O}(\mathrm{Kn}^3)}\label{eq:aux1}\\
    &={\tau} \int_{\mathbb{R}^3} {k_B}\left[\frac{3}{2}\frac{(\vec{w}\otimes \vec{w})^d}{e}:\mathbb{D}^d \right]^2f^{(0)} d\vec{v} \nonumber \\
    &+2{\tau}\int_{\mathbb{R}^3} {k_B}\left[\frac{3}{2}\frac{(\vec{w}\otimes \vec{w})^d}{e}:\mathbb{D}^d \right]\cdot \left[\left(\frac{3}{4}\frac{\abs{\vec{w}}^2}{e} - {\frac{5}{2}} \right) \vec{w}\cdot \frac{\nabla_{\vec{x}} e}{e}\right] f^{(0)} d\vec{v} \nonumber \\
    &+{\tau} \int_{\mathbb{R}^3} {k_B}\left[\left(\frac{3}{4}\frac{\abs{\vec{w}}^2}{e} - \frac{5}{2} \right) \vec{w}\cdot \frac{\nabla_{\vec{x}} e}{e}\right]^2 f^{(0)} d\vec{v} + \mathcal{O}(\mathrm{Kn}^3),
\end{align}
{where we employed the fact that $\int_{\mathbb{R}^3}f^{(1)}\log{f^{(0)}}\,d\vec{v} = 0$ as a consequence of {\eqref{eq:solvability}}.} 

We can simplify the above expression by introducing the viscosity $\nu$ and heat conductivity $\kappa$ to obtain the following compact expression for the entropy production at first order. Details can be found in \ref{appendix_CE} and are standard in the literature \cite{cercignani,Chapman,Harris}:
\begin{align}
       {\theta} \xi^{(1)}(\vec{x},t)    &={2\nu (\mathbb{D}^d : \mathbb{D}^d) + \kappa\frac{\abs{\nabla_{\vec{x}}{\theta}}^2}{{\theta}}} + \mathcal{O}(\mathrm{Kn}^3),
    \label{eq:entropyProductionFirstOrder}
\end{align}
where the viscosity and heat conductivity can be explicitly computed as
\begin{equation}
    \nu = \tau p, \quad \kappa = \frac{5}{2} R_s \tau p.
\end{equation}
The entropy production expression \eqref{eq:entropyProductionFirstOrder} must hold along with the constraint \eqref{eq:newj14}, i.e.
\begin{equation}
\label{eq:auxo2}
{\theta} \xi^{(1)} =   (\mathbb{T}^{(1)} + p\mathbb{I}) : {\mathbb{D}} {-} \frac{\vec{Q}^{(1)}\cdot\nabla_{\vec{x}}{\theta}}{{\theta}}.
\end{equation}
Once again we must invoke a selection criterion to select the constitutive relations for $\mathbb{T}^{(1)}$ and $\vec{Q}^{(1)}$ that satisfy the above relation.

Invoking again \Cref{selec:NC} for the two previous relations \eqref{eq:entropyProductionFirstOrder} and \eqref{eq:auxo2} to hold for arbitrary velocity and temperature fields, assuming piecewise linear relationship between thermodynamic fluxes $\{\mathbb{T}+p\mathbb{I}, \vec{Q}\}$ and thermodynamic affinities $\{\rev{\mathbb{D}^d},\rev{-}\nabla_{\vec{x}}{\theta}/{\theta}\}$, 
we must have up to first order in Knudsen number
\begin{equation}
\label{eq:ns-closures}
    \mathbb{T}{^{(1)}} = -p \mathbb{I} \,{+}\, 2 \nu \mathbb{D}^d, \quad \vec{Q}{^{(1)}} = -{\kappa \nabla_{\vec{x}} {\theta}},
\end{equation}
thus recovering the well-known constitutive relations for a compressible Newtonian fluid (with zero bulk viscosity) and for Fourier heat-conduction.

\begin{remark}
Note that, outside the realm of linear constitutive relationships, the constitutive closures for the irreversible part of the Cauchy stress, $\mathbb{T}+p\mathbb{I}$, and for the heat flux vector $\vec Q$ can no longer be uniquely determined from \eqref{eq:auxo2} and \eqref{eq:entropyProductionFirstOrder}, as before.
Analogously to \eqref{eq:counterexample} we can construct infinitely many different constitutive closures of the form of
\begin{equation}
\label{eq:alternative_closure}
\mathbb{T}{^{(1)}}+p\mathbb{I} = 2\nu \mathbb{D}^d - \alpha \nabla {\theta} \otimes \nabla {\theta},
\qquad
\vec Q{^{(1)}} = -\kappa \nabla {\theta} - \alpha {\theta} \mathbb{D}\nabla {\theta},
\end{equation}
for $\alpha$ arbitrary, which represents a possible alternative constitutive closure that is no longer linear in the thermodynamic affinities.
\end{remark}
Another selection criterion, which we consider of more fundamental nature, can be provided by the principle of maximum entropy production.
\begin{criterion}
    Rajagopal and Srinivasa \cite{rajagopalSrinivasa} introduced an alternative selection criterion, which allows one to uniquely determine the constitutive relations for the thermodynamic fluxes, here $\mathbb{T}+p\mathbb{I}$ and $\vec Q$, from the two expressions for entropy production given in \eqref{eq:entropyProductionFirstOrder} and \eqref{eq:auxo2}.
    This criterion states that the constitutive closures for the thermodynamic fluxes are those for which the entropy production \eqref{eq:entropyProductionFirstOrder} is maximized with respect to the thermodynamic affinities, here $\rev{\mathbb{D}^d}$ and $\rev{-}\nabla_{\vec x} {\theta} / {\theta}$, subject to the linear constraint \eqref{eq:auxo2}.
    \rev{Equivalently, via a duality argument \cite{vitMalek}, the entropy production is maximized over the thermodynamic fluxes, here $\mathbb{T}+p\mathbb{I}$ and $\vec Q$, subject to the same linear constraint \eqref{eq:auxo2} (see \Cref{principle:RS} for a general statement).} Solving this constrained optimization problem indeed uniquely recovers the classical closures \eqref{eq:ns-closures}.
\end{criterion}
\rev{The dual formulation of the principle of maximum entropy production fits particularly well with the setting of implicit constitutive relations \cite{RajagopalImplicit}.} We refer to the combination of this selection procedure together with the computation of constitutive relation for entropy and entropy production via the Chapman--Enskog expansion as a \textit{hybrid Rajagopal--Srinivasa--Chapman--Enskog} closure.
\begin{remark}
    The Stokes hypothesis, i.e.~the assumption that the bulk viscosity is zero, which must hold as discussed in \Cref{sec:EoS} for the kinetic theory of rarefied gases, can also be seen as a consequence of the fact that the entropy production rate at first order \eqref{eq:entropyProductionFirstOrder} only depends on the deviatoric part of the tensor $\mathbb{D}^d$, and not on the volumetric part $\frac{1}{3}(\nabla_{\vec{x}} \cdot \vec{u})\mathbb{I}$.
    In analogy to \Cref{selec:KT}, the Stokes hypothesis can also be used to restrict the class of admissible constitutive relations. 
\end{remark}

Notice that Rajagopal and Srinivasa's selection criterion does not select between \eqref{euler} and \eqref{eq:counterexample} at the level of the zeroth-order approximation. However, we can still apply it working backwards from the first order expansion.
As the Knudsen number goes to zero, we expect the first order Chapman--Enskog expansion $f=f^{(0)} + \mathrm{Kn} f^{(1)}$ to converge to the zeroth-order expansion $f=f^{(0)}$.
Thus, we expect that the first order entropy production $\xi^{(1)}$ converges to the zeroth-order entropy production $\xi^{(0)} \equiv 0$ and the first order constitutive relations \eqref{eq:ns-closures} to converge to the Euler constitutive relations \eqref{euler}.
By this argument we identify \eqref{euler} as the unique limit of the first order constitutive relations \eqref{eq:ns-closures} as the Knudsen number goes to zero. This is confirmed by the fact that both the viscosity $\nu$ and the heat conductivity $\kappa$ scale like $\mathrm{Kn}$ in view of \eqref{eq:tau_Kn}.

This idea can be generalised to any of the selection criteria.
\begin{criterion}[\textbf{Backward compatibility procedure}]
    \label{selec:BK}
    Given a selection criterion that allows one to select the constitutive relations for the thermodynamic fluxes at first order, we can select the zeroth-order constitutive relations as the unique limit of the first order constitutive relations as the Knudsen number goes to zero.
\end{criterion}

\section{Relaxation time and maximization of the rate of entropy production principles}
In the previous section we recovered the traditional thermodynamic framework of
linear irreversible thermodynamics in the setting of Navier--Stokes--Fourier
fluids. By invoking the standard thermodynamic concepts of \emph{thermodynamic
fluxes}---the rates at which a physical quantity such as mass, charge, or
energy is transported through the system---and \emph{thermodynamic affinities},
which represent the forces driving this transport, we obtained two expressions
for the entropy production,
\begin{equation}
\theta\xi = \widehat{\xi}(\mathbf{A}), \qquad \theta\xi = \mathbf{J}\cdot\mathbf{A},
\end{equation}
corresponding respectively to \eqref{eq:entropyProductionFirstOrder} and \eqref{eq:auxo2}.
In this formulation, the thermodynamic fluxes correspond
to the components of $\mathbb{T}+p\mathbb{I}$ and $\mathbf{Q}$, while the
corresponding affinities are the components of $\rev{\mathbb{D}^d}$ and $\rev{-}\nabla \theta / \theta$.

We now propose an extension of the procedure outlined
above. This extension is guided partly by theory and partly by empirical considerations.
Under the assumptions of a first-order Chapman--Enskog truncation and a BGK collision operator, we arrived at (see \rev{\eqref{eq:aux1}})
\begin{align}
      \label{eq:intermidiate}
\xi^{(1)}(\vec{x},t)
      & = \frac{\rev{k_B}}{\tau}\int_{\mathbb{R}^3} 
        \mathrm{Kn}^2f^{(1)}
        \left[ 
            \frac{f^{(1)}}{f^{(0)}}
        \right]
      d\vec{v}
      + \mathcal{O}(\mathrm{Kn}^3).
\end{align}
At this point we assume that substituting $f^{(1)}$, whose form depends on the collision operator, into the previous expression, yields
\begin{equation}
    \label{eq:aux2}
	\theta\xi^{(1)}(\vec{x},t) = \frac{1}{\tau} \,\iota(\mathbf{A}(\vec{x},t)),
\end{equation}
where $\iota$ is a strictly convex function of the affinities $\mathbf{A}$.
We will discuss the validity of this assumption for the particular case of the Boltzmann equation with the BGK collision operator in \Cref{ex:whenDoesItHold}.

The relaxation time reflects the speed of local equilibration and thus
represents an essential characteristic of the material response. Motivated by
this interpretation, {\it we suppose that the relaxation time
is determined by the thermodynamic fluxes and affinities}.
In this spirit, we generalize \eqref{eq:aux2} by introducing the ansatz
\begin{equation}
\label{eq:thermodynamic-ansatz}
\theta\xi(\vec{x},t)
   = {\frac{1}{\hat{\tau}({\mathbf{J}}(\vec{x},t),\vec{A}(\vec{x},t))}
     \,\iota(\mathbf{A}(\vec{x},t))},
\end{equation}
where $\iota(\mathbf{A})$ is computable from kinetic theory, together with the independent constraint arising from the balance laws,
\begin{equation}
    \theta\xi(\vec{x},t) = \mathbf{J}(\vec{x},t)\cdot\mathbf{A}(\vec{x},t),
\end{equation}
and the assumption
\begin{equation}
    \tau(\vec{x},t) = {\hat{\tau}({\mathbf{J}}(\vec{x},t),\vec{A}(\vec{x},t))}.
\end{equation}
Our task is to identify a unique relation between thermodynamic fluxes and affinities.
This is precisely the situation in which
Rajagopal and Srinivasa~\cite{rajagopalSrinivasa} propose the additional
selection criterion of \emph{maximization of the rate of entropy production}.
In the present context, this principle may be formulated as follows.
\begin{principle}[\textit{Rajagopal--Srinivasa entropy production maximization}]
\label{principle:RS}
If a system described locally by fluxes $\mathbf{J}$ and affinities
$\mathbf{A}$ has entropy production $\theta \xi = \widehat{\xi}(\mathbf{A},\mathbf{J})$
such that $\widehat{\xi}(\mathbf{A},\cdot)$ is strictly convex and nonnegative for each
$\mathbf{A}$, and it is subject to the constraint $\theta\xi = \mathbf{J}\cdot\mathbf{A}$,
then the material response uniquely relating $\vec{J}$ and $\vec{A}$ is obtained from
the constrained maximization problem
\begin{equation}
\underset{\mathbf{J}}{\mathrm{maximize}}\;{\theta}\widehat{\xi}(\mathbf{J},\mathbf{A})
   \qquad\text{subject to}\qquad
   {\xi} = \frac{1}{\theta}\mathbf{J}\cdot\mathbf{A}
\end{equation}
\end{principle}

We will relate this to a principle of minimal relaxation time. 

\begin{principle}[Minimal relaxation time principle]
\label{principle:MinRelaxTime}
Let the local relaxation time $\tau$ for a process with entropy
production $\xi$ be fully determined by the thermodynamic fluxes $\mathbf{J}$.
Then the thermodynamic response, for a fixed set of affinities $\mathbf{A}$, is such as to \emph{minimize the relaxation
time towards equilibrium}.
\end{principle}

\begin{proposition}
\label{prop:maxEntropy}
Under the assumption that $\theta\widehat\xi(\mathbf{J},\mathbf{A})$ is a convex non-negative function of $\mathbf{J}$ for each $\mathbf{A}$, and 
\begin{equation}
    \label{eq:tau_iota_xi}
    \theta\widehat\xi(\mathbf{J},\mathbf{A}) = {\frac{1}{\tau({\mathbf{J}},\mathbf{A})^{\rev{\alpha}}}}\,\iota(\mathbf{A}),
\end{equation}
with $\rev{\alpha>0}$, $\iota$ a positive functional of the {affinities}, \Cref{principle:RS} is equivalent to 
\Cref{principle:MinRelaxTime}.
For a {prescribed} set of affinities
$\mathbf{A}$, the relaxation time is given
by
\begin{equation}
    \tau = \underset{{\mathbf{J}}\in\mathcal{J}}{\mathrm{min}}
      \;\tau({\mathbf{J}},\mathbf{A}),
\end{equation}
where the admissible constitutive relations are
$\mathcal{J}
   \coloneqq
   \left\{
       {\mathbf{J}\in \mathbb{R}^{d}}
       \,:\,
       \theta\xi = {\mathbf{J}}\cdot\mathbf{A}
   \right\}
$.
\end{proposition}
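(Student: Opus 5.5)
The key observation is that, for fixed affinities $\mathbf{A}$, the objective of \Cref{principle:RS} depends on $\mathbf{J}$ only through $\tau(\mathbf{J},\mathbf{A})$. By \eqref{eq:tau_iota_xi}, $\theta\widehat\xi(\mathbf{J},\mathbf{A}) = g(\tau(\mathbf{J},\mathbf{A}))\,\iota(\mathbf{A})$, where $g(s)=s^{-\alpha}$. Since $\alpha>0$, the map $g$ is strictly decreasing on $(0,\infty)$. Since $\iota(\mathbf{A})>0$ is a constant with respect to $\mathbf{J}$, multiplying by it preserves strict monotonicity. Hence, for $\mathbf{J}_1,\mathbf{J}_2$ in the admissible set $\mathcal{J}$,
\begin{equation*}
\theta\widehat\xi(\mathbf{J}_1,\mathbf{A}) \ge \theta\widehat\xi(\mathbf{J}_2,\mathbf{A}) \iff \tau(\mathbf{J}_1,\mathbf{A}) \le \tau(\mathbf{J}_2,\mathbf{A}).
\end{equation*}
Both principles optimize over the same feasible set $\mathcal{J}$, because the constraint $\theta\xi=\mathbf{J}\cdot\mathbf{A}$ is common to both. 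It follows that the set of maximizers of $\theta\widehat\xi(\cdot,\mathbf{A})$ on $\mathcal{J}$ coincides with the set of minimizers of $\tau(\cdot,\mathbf{A})$ on $\mathcal{J}$. This gives the equivalence, together with the formula $\tau=\min_{\mathbf{J}\in\mathcal{J}}\tau(\mathbf{J},\mathbf{A})$ for the realized relaxation time.

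The steps, in order, are as follows.
\begin{enumerate}
\item Make the feasible set explicit. I read $\mathcal{J}$ as the set of $\mathbf{J}$ with $\widehat\xi(\mathbf{J},\mathbf{A})=\frac1\theta\mathbf{J}\cdot\mathbf{A}$, i.e. the constraint is imposed with $\xi$ given by the constitutive ansatz. I then note that \Cref{principle:MinRelaxTime} is posed on the same set, since the admissible responses are exactly those compatible with the balance-law constraint.
\item Since $\iota(\mathbf{A})>0$ and $\tau>0$, rewrite $\tau(\mathbf{J},\mathbf{A}) = \bigl(\iota(\mathbf{A})/\theta\widehat\xi(\mathbf{J},\mathbf{A})\bigr)^{1/\alpha}$ wherever $\widehat\xi>0$. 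This is an explicit strictly decreasing reparametrization of the objective.
\item Conclude that the argmax and argmin coincide.
\item Address existence and uniqueness, so that ``the'' response is well defined.
\end{enumerate}

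The main obstacle is the last step, together with the degenerate cases. One degenerate case is $\widehat\xi=0$, which corresponds formally to $\tau=\infty$ and is consistent with the ordering. The other concerns existence of an optimizer. For this I would invoke the standing hypotheses of \Cref{principle:RS}: convexity and nonnegativity of $\widehat\xi(\cdot,\mathbf{A})$. Combined with the linear-in-$\mathbf{J}$ constraint, these give that the optimization problem is the standard Rajagopal--Srinivasa one. Its solution, obtained via a Lagrange multiplier $\lambda$ with $\partial_{\mathbf{J}}\widehat\xi = \lambda\,\partial_{\mathbf{J}}(\mathbf{J}\cdot\mathbf{A})$ as in \cite{vitMalek}, is unique when strict convexity holds. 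Uniqueness then transfers verbatim to the minimal relaxation-time problem by the monotone correspondence. I would not attempt to prove existence beyond this. Instead, I would state that the equivalence holds at the level of optimizer sets, so whenever one principle selects a unique response, the other selects the same one.
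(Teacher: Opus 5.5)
Your argument is correct and is essentially the paper's proof. With $\mathbf{A}$ fixed, $\iota(\mathbf{A})>0$ is a constant and $s\mapsto s^{-\alpha}$ is strictly decreasing, so maximizing $\theta\widehat\xi(\cdot,\mathbf{A})$ over $\mathcal{J}$ is the same problem as minimizing $\tau(\cdot,\mathbf{A})$ over $\mathcal{J}$. Your extra points (the common feasible set, the degenerate case $\widehat\xi=0$, and the existence and uniqueness discussion) are more careful than the paper's two-line proof but are not needed for the equivalence itself.
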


\begin{proof}
With $\mathbf{A}$ fixed, the quantity $\iota(\mathbf{A})$ is constant and positive.
	Thus by \eqref{eq:tau_iota_xi} minimizing $\tau(\mathbf{J}, \mathbf{A})$ is equivalent to maximizing $\hat{\xi}(\mathbf{J},\mathbf{A})$ as prescribed by \Cref{principle:RS}. 
\end{proof}
\begin{remark}
    Notice that as a result of the previous proposition, both the Rajagopal--Srinivasa principle and the minimal relaxation time principle produce a unique relation between thermodynamic fluxes and affinities.
    Furthermore such relations are by construction compatible with the second law of thermodynamics.
\end{remark}
\begin{example}
    \label{ex:whenDoesItHold}
        A natural question at this point is: when does the crucial assumption \eqref{eq:aux2} hold?
        Let us consider $f^{(1)}$ as in \eqref{eq:first_order_correction}, i.e.
        \begin{equation}
            f^{(1)}(\vec{x},\vec{v},t) = -\tilde{\tau} f^{(0)}\left[
                \frac{3}{2}\frac{(\vec{w}\otimes \vec{w})^d}{e}:\mathbb{D}^d + \left(\frac{3}{4}{\frac{\abs{\vec{w}}^2}{e}} - \frac{5}{2} \right) \vec{w}\cdot \frac{\nabla_{\vec{x}} e}{e} 
            \right], 
        \end{equation}
        which leads to the entropy production presented in \eqref{eq:entropyProductionFirstOrder}, i.e.
        \begin{equation}
		{\theta} \xi^{(1)}(\vec{x},t)={2 \tau p (\mathbb{D}^d : \mathbb{D}^d) + \frac{5}{2}  R_s\tau p \frac{\abs{\nabla_{\vec{x}}{\theta}}^2}{\theta}} + \mathcal{O}(\mathrm{Kn}^3).
        \end{equation}
        Expanding $p$ using \eqref{eq:idealgas} we obtain the following expression for the entropy production
        \begin{equation}
            \label{eq:auxo3}
            {\theta}\xi^{(1)}(\vec{x},t)={2\tau R_s{\theta} \rho (\mathbb{D}^d : \mathbb{D}^d) + \frac{5}{2}{R_s^2} \tau {\theta} \rho \frac{\abs{\nabla_{\vec{x}}{\theta}}^2}{{\theta}}} + \mathcal{O}(\mathrm{Kn}^3).
        \end{equation}
        It follows from \eqref{eq:scalingT} and \eqref{eq:thermal_equation_of_state} that 
        \begin{equation}
            \frac{1}{\tau} \propto \rho \sqrt{\theta} \implies \theta= \frac{1}{R_s}\left(C_{B}\rho\tau\right)^{-2},
        \end{equation}
        where $C_B$ is a constant depending on the molecular properties of the gas.
        Substituting the previous expression for $\theta$ into the right hand-side of \eqref{eq:auxo3}
        \begin{equation}
            {\theta}\xi^{(1)}(\vec{x},t)=\frac{1}{\tau}\left[\frac{2}{C_{\mathcal{B}}^2\rho}(\mathbb{D}^d:\mathbb{D}^d)+\frac{5}{2}\frac{{R_s}}{C_\mathcal{B}^2\rho}\frac{\abs{\nabla_{\vec{x}}{\theta}}^2}{{\theta}}\right] +\mathcal{O}(\mathrm{Kn}^3),
        \end{equation}
        which is exactly an entropy production in the form of \eqref{eq:aux2}.
\end{example}

\Cref{prop:maxEntropy} provides a kinetic interpretation of \Cref{principle:RS}, the
Rajagopal and Srinivasa selection procedure. Although the argument remains
phenomenological, it offers a new viewpoint: the material response is selected
as the one that yields the \emph{fastest possible relaxation} among all
admissible responses consistent with \eqref{eq:newj14}.

\section{Beyond Euler and Navier--Stokes: Leslie--Ericksen equations}
\label{sec:LE}
So far we have considered the Euler and Navier--Stokes--Fourier equations, where both the Chapman--Enskog and Chapman--Enskog--Rajagopal--Srinivasa procedures yield the same conclusions.
In this section, we briefly discuss a setting where the two approaches lead to different constitutive relations, drawn from \cite{farrell}.
We consider the Boltzmann--Curtiss equation \cite{curtissI,curtissV}, i.e.
\begin{equation}
	\partial_t f + \nabla_{\vec{x}}\cdot(\vec{v}f)+\nabla_{\vec{\alpha}}
    \cdot(\dot{\vec{\alpha}}f) = Q_{BC}[f,f],\label{eq:boltzmannCurtiss}
\end{equation}
where the configuration space has been enlarged to include $\vec{\alpha}$, the vector of the three Euler angles describing the orientations of the molecules. The one particle distribution function $f$ now depends on all the components of the enlarged phase space, i.e.~$f=f(\vec{x},\vec{v},\vec{\alpha},\dot{\vec{\alpha}},t)$.
This kinetic equation was introduced to model dilute gases of rigid, non-spherical molecules, for which rotational degrees of freedom play a fundamental role in the macroscopic dynamics.
The Boltzmann--Curtiss collision operator $Q_{BC}$ is defined as
\begin{equation} \label{eq:collisionop}
 Q_{BC}[f,f] \coloneqq   \int\!\!\!\!\int\!\!\!\!\int\!\!\!\!\int (f_*^\prime f^\prime-f_*f)(\vec{k}\cdot\vec{\mathfrak{g}})dSd\vec{k}d\vec{p}_2d\vec{\alpha}_2d\rev{\dot{\vec{\alpha}}}_2,
\end{equation}
where $\vec{\mathfrak{g}}$ is the relative velocity of the point of contact and $dS\mathrm{d}\vec{k}$ is the surface element of the excluded volume of the two interacting molecules.

Since the phase space has been enlarged it is possible to prove that there is one additional independent collision invariant beyond those of the classical Boltzmann equation. This is the angular momentum \cite{carrillo,curtissI}, given by $\psi = \mathbb{N}\,\vec{\omega}+\vec{q}\times m\vec{v}$.\footnote{The microscopic angular momentum is also a collision invariant for the classical Boltzmann equation, but it is not independent from the other four classical collision invariants. For this reason the Boltzmann equation yields a symmetric stress tensor and no additional balance equation for the macroscopic angular momentum.}
Here $\vec{\omega}$ is the angular velocity of the molecule, which can be linearly related to $\dot{\vec{\alpha}}$ via a matrix $\Xi(\vec{\alpha})$ and $\mathbb{N}$ is the inertia tensor of the molecule.
This results in an additional balance law governing the macroscopic intrinsic angular momentum $\vec{\mu}$:
\begin{align}
    \rho\Big[\partial_t \vec{\mu} + (\nabla_{\vec{x}}\vec{\mu})\vec{u}\Big] -\nabla\cdot\mathbb{M} = \mathrm{skew}(\mathbb{T}-\mathbb{T}^T), \\
    \vec{\mu}\coloneqq \int \mathbb{N}\vec{\omega} f\,d\vec{v}d\vec{\dot{\alpha}}d\vec{\alpha}, \quad
    \rev{\vec{\gamma}\coloneqq \int \vec{\omega} f\, d\vec{v}d\vec{\dot{\alpha}}d\vec{\alpha}}, \\
    \mathbb{M}\coloneqq -\int \vec{w}\otimes(\mathbb{N}\vec{\omega})f \,d\vec{v}d\vec{\dot{\alpha}}d\vec{\alpha},
\end{align}
where $\mathrm{skew}$ associates a vector to a skew symmetric matrix and \rev{$\vec{\gamma}$ is the macroscopic angular velocity}.
The microscopically conserved energy $\psi = \frac{1}{2} m \abs{\vec{v}}^2 + \frac{1}{2}\vec{\omega} \cdot \mathbb{N}\vec{\omega}$ now includes an angular contribution, leading to the following equation governing the internal energy:
\begin{equation}
    \label{eq:energy_balance_le}
    \partial_t (\rho e )+\nabla_{\vec{x}}\cdot (\rho \vec{u}e ) -\mathbb{T}:\nabla_{\vec{x}}\vec{u}-\mathbb{M}:\nabla_{\vec{x}}\vec{\mu}+\nabla_{\vec{x}}\cdot \vec{Q} = 0,
\end{equation}
where $A:B$ represents the double contraction of two tensors $A$ and $B$, i.e.~$A:B = \sum_{i,j} A_{ij}B_{ij}= \mathrm{tr}(A^TB)$.

{
        To obtain the zeroth-order constitutive relations, following a Chapman--Enskog procedure, we simply evaluate the macroscopic fluxes on the Maxwellian distribution $f^{(0)}$
	\begin{equation}
		f^{(0)}(\vec{\alpha},\vec{V},\vec{\Omega})=\frac{nQ\sin(\alpha_2)}{\int_{\mathbb{R}^3}Q\sin(\alpha_2)\,d\alpha_2}\frac{m^{\frac{3}{2}}(I_1I_2I_3)^{\frac{1}{2}}}{(2\pi k_B \theta)^3}\exp\Big[-m\frac{\abs{\vec{V}}^2}{2 k_B \theta}-\frac{\vec{\Omega}\cdot\mathbb{N}\vec{\Omega}}{2 k_B \theta}\Big],\label{eq:maxwellian_Curtiss}
	\end{equation}
	where $\vec{V} = \vec{v}-\vec{u}$ is the peculiar velocity, $\vec{\Omega} = \vec{\omega}-\vec{\gamma}$ is the peculiar angular velocity, and $I_1, I_2, I_3$ are the principal moments of inertia of the molecule.
        
	Using the structure of $f^{(0)}$ and the fact that the translational Maxwellian is isotropic in $\vec{w}$:
        \begin{equation}
		\label{eq:T0}
	\mathbb{T}^{(0)} = -\int m\,\vec{w}\otimes\vec{w}\, f^{(0)}\,d\vec{v}\,d\vec{\alpha}\,d\dot{\vec{\alpha}} = -p\,\mathbb{I},
        \end{equation}
	where $p = \rho\mathrm{R}_s\theta = \rho T$ is the thermodynamic pressure, since we can evaluate $\mathbb{E}_{f^{(0)}}[\vec{w}\otimes\vec{w}]$ to be $\frac{1}{3}\abs{\vec{w}}^2\mathbb{I}$ by the isotropy of the Gaussian. In particular, $\mathbb{T}^{(0)}$ is symmetric and isotropic, hence the zeroth-order stress carries no trace of molecular anisotropy.

	Similarly since $f^{(0)}$ is even in $\vec{w}$ and even in $\vec{\Omega}$, all odd moments vanish, i.e.
        \begin{equation}
		\label{eq:Q0}
		\vec{Q}^{(0)} = \int \!\left(\frac{1}{2}m\abs{\vec{w}}^2 + \frac{1}{2}\vec{\Omega}\cdot\mathbb{N}\vec{\Omega}\right)\vec{w}\, f^{(0)}\,d\vec{v}\,d\vec{\alpha}\,d\dot{\vec{\alpha}} = \mathbf{0}.
        \end{equation}
        
Analogously, the odd moment of the form $\int \vec{w}\otimes(\mathbb{N}\vec{\Omega})\, f^{(0)}\,d\vec{v}\,d\vec{\alpha}\,d\dot{\vec{\alpha}}$ also vanishes, so
        \begin{equation}\label{eq:M0}
	\mathbb{M}^{(0)} = -\int \vec{w}\otimes(\mathbb{N}\vec{\Omega})\, f^{(0)}\,d\vec{v}\,d\vec{\alpha}\,d\dot{\vec{\alpha}} = \mathbf{0}.
        \end{equation}
        
	Substituting $\mathbb{T}^{(0)}=-p\mathbb{I}$, $\vec{Q}^{(0)}=\mathbf{0}$, $\mathbb{M}^{(0)}=\mathbf{0}$ into the balance laws, we obtain:
    \begin{subequations}
        \label{eq:boltzmann_curtiss_chapman_enskog}
        \begin{align}
		\frac{\partial\rho}{\partial t} + \nabla_{\vec{x}}\cdot (\rho\vec{u}) &= 0, \label{eq:euler-mass}\\[4pt]
		\rho\!\left(\frac{\partial\vec{u}}{\partial t} + (\nabla_{\vec{x}}\vec{u})\vec{u}\right) + \nabla_{\vec{x}} p &= \mathbf{0}, \label{eq:euler-mom}\\[4pt]
		\rho\!\left(\frac{\partial e}{\partial t} + \vec{u}\cdot\nabla_{\vec{x}} e\right) + p\,\nabla_{\vec{x}}\cdot\vec{u} &= 0. \label{eq:euler-energy}
        \end{align}
        These are exactly the compressible Euler equations, supplemented by
        \begin{equation}\label{eq:euler-angmom}
		\rho\left[\partial_t\vec{\mu} + (\nabla_{\vec{x}}\vec{\mu})\vec{u}\right] = \mathbf{0},
        \end{equation}
        \end{subequations}
	since $\mathbb{M}^{(0)}=\mathbf{0}$ and $\mathbb{T}^{(0)}$ is symmetric, so $\mathrm{skw}(\mathbb{T}^{(0)}-{\mathbb{T}^{(0)}}^T) = \mathbf{0}$, thus the angular momentum is conserved at the macroscopic level since $\dot{\vec{\mu}} = \mathbf{0}$.
	
	Hence the classical Chapman--Enskog procedure at zeroth-order yields the compressible Euler equations, with an additional conservation law for the macroscopic angular momentum, but with the Cauchy stress tensor $\mathbb{T}$ having no anisotropic contributions.
        
}

\rev{To retain the anisotropy in the Cauchy stress tensor, Farrell, Russo \& Zerbinati \cite{farrell} work under the} assumption that local molecular alignment has emerged, so the inertia tensor can be expressed as $\mathbb{N}=\lambda (\mathbb{I}-\vec{n}\otimes \vec{n})$, where $\vec{n}:\mathbb{R}^3\to \mathbb{S}^2$ is the nematic director describing the common alignment of the molecules\footnote{In \cite{carrillo} the Boltzmann--Curtiss equation is derived with an additional mean-field force enforcing this collective behavior.}.
The fact that the inertia tensor has the structure described above is used in \cite{farrell} to observe that the internal energy $e$ can be expressed in terms of the macroscopic velocity $\vec{u}$, the density $\rho$ and the deformation of the nematic configuration $\vec{n}$, i.e.
\begin{equation}
    e=e(\theta,\nabla_{\vec{x}}\vec{n},\rho) = \frac{5}{2} R_s \theta + \frac{\lambda}{2}\rho R_s \theta \,\mathrm{tr}(\nabla_{\vec{x}}\vec{n}^T\nabla_{\vec{x}}\vec{n}).
\end{equation}
\rev{
We can thus rewrite \eqref{eq:energy_balance_le} making use of the above expression for the internal energy, which yields
\begin{align}
	\label{eq:variational}
	\Big(-\mathbb{T}-(\frac{\partial e}{\partial(\nabla\vec{\nu})})^T\nabla_{\vec{x}}\vec{\nu}\Big):\nabla_{\vec{x}}\vec{v}+\Big(-\mathbb{M}+\vec{\nu}\times(\frac{\partial e}{\partial\nabla\vec{\nu}})\Big):\nabla_{\vec{x}}\vec{\omega}= 0,
\end{align}
where we neglected the heat flux $\vec{Q}$ since it is zero at zeroth-order and the vector product operation between a vector and a tensor has been overloaded as $\vec{\nu}\times S = \varepsilon_{iqp}\nu_qS_{jp}$.  
}
In \cite{farrell} rational thermodynamic ideas are then invoked to \rev{identify the constitutive relations for the thermodynamic fluxes $\mathbb{T}$ and $\mathbb{M}$ to be
\begin{equation}
	\mathbb{T}^{(*)}=-\left(\frac{\partial e}{\partial(\nabla\vec{\nu})}\right)^T\nabla_{\vec{x}}\vec{\nu}, \qquad \mathbb{M}^{(*)}=\vec{\nu}\times\left(\frac{\partial e}{\partial\nabla\vec{\nu}}\right),\label{eq:constitutiveNollColemann}
\end{equation}	
}
This procedure yields the following governing equations for a compressible inviscid calamitic fluid,
\begin{subequations}
  \label{eq:LeslieEricksenWithPressureAndEnergy}
  \begin{align}
    \partial_t \rho + \nabla_{\vec{x}}\cdot (\rho\vec{u}) &= 0,\\
    \rho\Big[\partial_t \vec{u} + (\nabla_{\vec{x}}\vec{u})\vec{u}\Big] + \nabla_{\vec{x}}\cdot\Big(p\mathbb{I}+p\frac{\lambda}{2}\nabla_{\vec{x}}\vec{n}^T\nabla_{\vec{x}}\vec{n}\Big) &= 0,\\
    \lambda\rho {\ddot{\vec{n}}}+\nabla_{\vec{x}}\cdot\Big(p\frac{\lambda}{2}\nabla_{\vec{x}}\vec{n}\Big) &= \tau\vec{n},\\
    \rho\Big[\partial_t e +\nabla_{\vec{x}}e\cdot \vec{u}\Big]+\Big(p\mathbb{I}+p\frac{\lambda}{2}\nabla_{\vec{x}}\vec{n}^T\nabla_{\vec{x}}\vec{n}\Big):\nabla_{\vec{x}}\vec{u} &= 0,\\
    \vec{n}\cdot \vec{n} &= 1.
  \end{align} 
\end{subequations}

These equations constitute a compressible form of the inviscid Leslie--Ericksen system \cite{ericksen1962,leslie1968,stewart2004}, augmented by an energy balance and a pressure-dependent coupling to the director field.
In particular, the anisotropic contribution $\nabla_{\vec{x}}\vec{n}^T \nabla_{\vec{x}}\vec{n}$ appears explicitly in the Cauchy stress tensor.

\rev{
In both \eqref{eq:boltzmann_curtiss_chapman_enskog} and \eqref{eq:LeslieEricksenWithPressureAndEnergy}, the closure only employed the Maxwellian distribution $f^{(0)}$. Whereas the Chapman--Enskog procedure (\Cref{selec:CE}) at this order yields an isotropic system that is not compatible with experimental observations of polar fluids, applying an alternative selection procedure based on rational thermodynamics yields an anisotropic system of equations.
This demonstrates how alternative selection procedures can yield more information than the corresponding Chapman--Enskog procedure, even at the same order of truncation.
}

\section{Acknowledgements}

This work was funded by 
the Engineering and Physical Sciences Research Council [grant number EP/W026163/1],
the Science and Technology Facilities Council [grant number UKRI/ST/B000495/1],
the Donatio Universitatis Carolinae Chair ``Mathematical modelling of multicomponent systems'', 
the UKRI Digital Research Infrastructure Programme through the Science and Technology Facilities Council's Computational Science Centre for Research Communities (CoSeC), and the Swedish Research Council under grant no.~Z2021-06594 while in residence at Institut Mittag-Leffler in Djursholm, Sweden. J.~M\'{a}lek acknowledges the support of the project No.~25-16592S financed by the Czech Science Foundation (GA\v{C}R). P.~E.~Farrell, J.~M\'{a}lek and O.~Sou\v{c}ek are members of the Nečas Center for Mathematical Modeling. For the purpose of open access, the authors have applied a CC BY public copyright licence to any author accepted manuscript arising from this submission.
No new data were generated or analysed during this study.

\appendix
\section{First order Chapman--Enskog expansion}
\label{appendix_CE}
For completeness in this section we outline the derivation of the first order Chapman--Enskog expansion leading to \eqref{eq:first_order_correction}.
This derivation can be found in the literature \cite{Chapman,Harris,cercignani2}, but for the sake of clarity we report it here using the same notation as in the main text.
Starting from the Chapman--Enskog expansion and separating the scales appearing in the Boltzmann equation we obtained \eqref{eq:ce_starting}, i.e.

\begin{equation}
    \label{eq:ce_starting_2}
    f^{(1)}(\vec{x},\vec{v},t) = -\tilde{\tau} \left( \frac{\partial}{\partial t_0} f^{(0)} + \vec{v} \cdot \nabla_{\vec{x}} f^{(0)} \right) = - \tilde{\tau}D_{\vec{v}}f^{(0)},
\end{equation}
where $D_{\vec{v}}$ is the differential operator defined respectively for scalar and vector fields as
\begin{equation}
    D_{\vec{v}} \circ \coloneqq \partial_{t_0} \circ + \vec{v} \cdot \nabla_{\vec{x}} \circ \quad \forall \circ:\mathbb{R}^3\to \mathbb{R}\qquad D_{\vec{v}} \circ = \partial_{t_0} \circ + (\nabla_{\vec{x}} \circ)\vec{v} \quad \forall\circ : \mathbb{R}^3 \to \mathbb{R}^{3}.
\end{equation}
\begin{remark}
	Taking the first three moments of \eqref{eq:ce_starting_2} and using the solvability condition \eqref{eq:solvability} we obtain that the macroscopic quantities $\rho(\vec{x},t)$, $\vec{u}(\vec{x},t)$ and $e(\vec{x},t)$ satisfy the compressible Euler equations, i.e.~\eqref{euler} in combination with \eqref{eq:balance_law_mass}, \eqref{eq:balance_law_mass}, and \eqref{eq:evolution_internal_energy}, at time scale $t_0$.
\end{remark}
We observe that $f^{(0)}$ is only determined by the macroscopic quantities $\rho(\vec{x},t)$, $\vec{u}(\vec{x},t)$ and $e(\vec{x},t)$, thus we can write using the chain rule
\begin{equation}
    D_{\vec{v}} f^{(0)} = \partial_{\rho} f^{(0)} D_{\vec{v}} \rho + \partial_{\vec{u}} f^{(0)} \cdot D_{\vec{v}} \vec{u} + \partial_{e} f^{(0)} D_{\vec{v}} e.
\end{equation}
We observe that $D_{\vec{v}} \circ$ can be decomposed as $D_{\vec{v}} \circ = D_{\vec{u}} \circ +\; \vec{w} \cdot \nabla_{\vec{x}} \circ\;$ {or, $D_{\vec{v}} \circ = D_{\vec{u}} \circ +\;  (\nabla_{\vec{x}} \circ)\vec w$) for vector fields.} Furthermore we also notice that 
\begin{equation}
    D_{\vec{v}}  f^{(0)} = f^{(0)}D_{\vec{v}} \log(f^{(0)}).
\end{equation}
Computing the logarithm of the Maxwellian \eqref{eq:maxwellian} {and employing the caloric equation of state \eqref{eq:thermal_equation_of_state},} we have
\begin{equation}
    \label{eq:logmaxwellian}
    \log(f^{(0)}) = \log(\rho) - \frac{3}{2}\log(T) - \frac{\abs{\vec{w}}^2}{2T} + C,
\end{equation}
where the constant $C$ is equal to $-{\frac{3}{2}}\log(2\pi)-\log m$ and thus does not depend on the macroscopic quantities. 
We can now compute $D_{\vec{v}} \log(f^{(0)})$ as
\begin{equation}
    D_{\vec{v}} \log(f^{(0)}) = \frac{1}{\rho}D_{\vec{v}} \rho - \frac{3}{2T} D_{\vec{v}} T - \frac{1}{2T} D_{\vec{v}} \abs{\vec{w}}^2 + \frac{\abs{\vec{w}}^2}{2T^2} D_{\vec{v}} T.
\end{equation}
We now compute each material derivative separately and making use of the balance laws with constitutive relations evaluated at zeroth-order \eqref{euler}, i.e.
\begin{align}
    D_{\vec{v}} \rho &= D_{\vec{u}} \rho + \vec{w} \cdot \nabla_{\vec{x}} \rho = -\rho \nabla_{\vec{x}} \cdot \vec{u} + \vec{w} \cdot \nabla_{\vec{x}} \rho,\\
    D_{\vec{v}} T &= D_{\vec{u}} T + \vec{w} \cdot \nabla_{\vec{x}} T = -\frac{2}{3} T \nabla_{\vec{x}} \cdot \vec{u} + \vec{w} \cdot \nabla_{\vec{x}} T,\\
    D_{\vec{v}} \abs{\vec{w}}^2 &= -2\vec{w} \cdot D_{\vec{v}} \vec{u} = -2\vec{w} \cdot \left(D_{\vec{u}} \vec{u} + \vec{w} \cdot \nabla_{\vec{x}} \vec{u}\right) \nonumber \\
                                &= -2\vec{w} \cdot \left(-\frac{1}{\rho} \nabla_{\vec{x}} p + \vec{w} \cdot \nabla_{\vec{x}} \vec{u}\right)=2\frac{1}{\rho}\nabla_{\vec{x}} p \cdot \vec{w} - 2(\vec{w} \otimes \vec{w}):\nabla_{\vec{x}} \vec{u}\nonumber \\
                                &= 2\nabla_{\vec{x}} T \cdot \vec{w} + 2\frac{T}{\rho}\nabla_{\vec{x}} \rho \cdot \vec{w} - 2(\vec{w} \otimes \vec{w}):\mathbb{D}.
\end{align}
Substituting these expressions into $D_{\vec{v}} \log(f^{(0)})$ we obtain
\begin{align}
    D_{\vec{v}} \log(f^{(0)}) &= -\frac{5}{2}\frac{1}{T} \nabla_{\vec{x}} T \cdot \vec{w} - \frac{\abs{\vec{w}}^2}{3T} \nabla_{\vec{x}} \cdot \vec{u} + \frac{\abs{\vec{w}}^2}{2T^2} \nabla_{\vec{x}} T \cdot \vec{w} + \frac{1}{T} (\vec{w} \otimes \vec{w}):\mathbb{D}\nonumber\\
    &= \frac{1}{T}(\vec{w} \otimes \vec{w}) : \mathbb{D}^d + \left(\frac{\abs{\vec{w}}^2}{2 T^2} - \frac{5}{2T}\right) \vec{w} \cdot \nabla_{\vec{x}} T,
\end{align}
using the caloric equation of state \eqref{eq:thermal_equation_of_state} we finally obtain
\begin{equation}
    f^{(1)}(\vec{x},\vec{v},t) = -\tilde{\tau} f^{(0)}\left[
        \frac{3}{2}\frac{(\vec{w}\otimes \vec{w})^d}{e}:\mathbb{D}^d {+}\left(\frac{3}{4}{\frac{\abs{\vec{w}}^2}{e}} - \frac{5}{2} \right) \vec{w}\cdot \frac{\nabla_{\vec{x}} e}{e} 
    \right].
\end{equation}

As discussed in \eqref{eq:aux}, we can use this expression for $f^{(1)}$ to compute the first order contribution to the entropy production $\xi^{(1)}$ as 
\begin{align}
    \label{eq:entropyProductionFirstOrderCalc}
    \xi^{(1)}(\vec{x},t) &={\tau} \int_{\mathbb{R}^3} {k_B}\left[\frac{3}{2}\frac{(\vec{w}\otimes \vec{w})^d}{e}:\mathbb{D}^d \right]^2f^{(0)} d\vec{v} \nonumber \\
    &+2{\tau}\int_{\mathbb{R}^3} {k_B}\left[\frac{3}{2}\frac{(\vec{w}\otimes \vec{w})^d}{e}:\mathbb{D}^d \right]\cdot \left[\left(\frac{3}{4} \frac{\abs{\vec{w}}^2}{e} - {\frac{5}{2}} \right) \vec{w}\cdot \frac{\nabla_{\vec{x}} e}{e}\right] f^{(0)} d\vec{v} \nonumber \\
    &+{\tau} \int_{\mathbb{R}^3} {k_B}\left[\left(\frac{3}{4}\frac{\abs{\vec{w}}}{e}^2 - \frac{5}{2} \right) \vec{w}\cdot \frac{\nabla_{\vec{x}} e}{e}\right]^2 f^{(0)} d\vec{v} + \mathcal{O}(\mathrm{Kn}^3).
\end{align}
We begin by observing that the second term in the previous expression is zero by symmetry; since the Maxwellian is an even function of $\vec{w}$, the whole integrand is an odd function of $\vec{w}$, and in turn of $\vec{v}$.
We are left with the first and third term.
We begin computing the first term using the well known identity for multivariate Gaussian distributions, often referred to as Isserlis' theorem, i.e.
\begin{equation}
    \mathbb{E}_{\sim \mathcal{N}(0, \sigma^2\mathbb{I})}\left[(\vec{w} \otimes \vec{w}:A)(\vec{w} \otimes \vec{w}:B)\right] = \sigma^4 \left(2 A : B + \mathrm{tr}(A)\mathrm{tr}(B)\right),
\end{equation}
where $A$ and $B$ are symmetric matrices and $\mathcal{N}(0, \sigma^2\mathbb{I})$ is the multivariate Gaussian distribution with mean zero and covariance $\sigma^2\mathbb{I}$.
We will then use this identity together with the fact that the Maxwellian \eqref{eq:maxwellian} is a multivariate Gaussian distribution with mean $\vec{0}$ and variance $\frac{2}{3}e\mathbb{I}$, multiplied by the density, i.e $f^{(0)} = \rho \mathcal{N}(0, \frac{2}{3}e\mathbb{I})$. 
Thus we can compute the first term as
\begin{equation}
    {\tau} \int_{\mathbb{R}^3} {k_B}\left[\frac{3}{2}\frac{(\vec{w}\otimes \vec{w})^d}{e}:\mathbb{D}^d \right]^2f^{(0)} d\vec{v} = 2R_s\tau\rho \mathbb{D}^d : \mathbb{D}^d.
\end{equation}
Using \eqref{eq:idealgas} we end up with the following expression for the first term
\begin{equation}
    {\tau} \int_{\mathbb{R}^3}{k_B}\left[\frac{3}{2}\frac{(\vec{w}\otimes \vec{w})^d}{e}:\mathbb{D}^d \right]^2f^{(0)} d\vec{v} = 2\tau \frac{p}{\theta} \mathbb{D}^d : \mathbb{D}^d.
\end{equation}
To compute the third term in the right-hand side of \eqref{eq:entropyProductionFirstOrderCalc} we introduce the auxiliary function $g(\vec{w}) = \left(\frac{3}{4}\frac{\abs{\vec{w}}^2}{e} - \frac{5}{2} \right) \vec{w}$ , and vector $\vec{a}=\frac{\nabla_{\vec{x}}e}{e}$. \rev{Thus}, we can write the third term as
\begin{align}
    {\tau} \int_{\mathbb{R}^3} {k_B}\left[\left(\frac{3}{4}\frac{\abs{\vec{w}}^2}{e} - \frac{5}{2} \right) \vec{w}\cdot \vec{a}\right]^2 f^{(0)} d\vec{v} &= \tau R_s\rho \mathbb{E}_{\sim \mathcal{N}(0, \frac{2}{3}e\mathbb{I})}\left[(g(\vec{w})\cdot \vec{a})^2\right] 
\end{align}
Using now another well known identity for multivariate Gaussian distributions, which can be proven by rotational invariance (isotropy),
\begin{equation}
    \mathbb{E}_{\sim \mathcal{N}(0, \sigma^2 \mathbb{I})}\left[g(\vec{w})\cdot \vec{a}\right] = \frac{1}{3}\mathbb{E}_{\sim \mathcal{N}(0, \sigma^2 \mathbb{I})}\left[\abs{g(\vec{w})}^2\right]\abs{\vec{a}}^2,
\end{equation}
we can compute the third term as
\begin{equation}
    {\tau} \int_{\mathbb{R}^3} {k_B}\left[\left(\frac{3}{4}\frac{\abs{\vec{w}}^2}{e} - \frac{5}{2} \right) \vec{w}\cdot \vec{a}\right]^2 f^{(0)} d\vec{v} = \frac{5}{3}R_s\tau\rho e \abs{\vec{a}}^2.  
\end{equation}
Using the fact that $\vec{a} = \frac{\nabla_{\vec{x}} e}{e}$ is also equal to $\frac{\nabla_{\vec{x}} \theta}{\theta}$, we end up with the following expression for the third term
\begin{equation}
    {\tau} \int_{\mathbb{R}^3} {k_B}\left[\left(\frac{3}{4}\frac{\abs{\vec{w}}^2}{e} - \frac{5}{2} \right) \vec{w}\cdot \frac{\nabla_{\vec{x}} e}{e}\right]^2 f^{(0)} d\vec{v} = \frac{5}{2}R_s\tau p \frac{\abs{\nabla_{\vec{x}} \theta}^2}{\theta^{2}}.
\end{equation}
\section{Beyond the BGK approximation}
\label{sec:full_collision}

A key assumption required to perform the hybrid Chapman--Enskog--Rajagopal--Srinivasa procedure outlined in the main text is that the entropy production $\theta \xi$ is a quadratic function of the affinities $\mathbb{D}^d$ and $-\nabla_{\vec{x}} \theta/\theta$ with an inverse proportionality to $\tau$, as in \eqref{eq:tau_iota_xi}.
We have proven that this is the case in \Cref{ex:whenDoesItHold} for the BGK approximation of the Boltzmann collision operator, and we did so by proving that the viscosity and thermal conductivity scale like
\begin{equation}
    \nu \sim \frac{1}{\tau \rho}, \qquad \kappa \sim \frac{1}{\tau \rho}.
\end{equation}
The purpose of this appendix is to identify the hypotheses on the scattering kernel $B$ of the full Boltzmann collision operator \eqref{eq:full_collision}, under which the first-order Chapman--Enskog transport coefficients $\nu$ and $\kappa$ scale like
\begin{equation}
    \nu \sim \frac{1}{\tau^\alpha \rho}, \qquad \kappa \sim \frac{1}{\tau^\alpha \rho},
\end{equation}
for some positive $\alpha$ and the expression for the entropy production $\theta \xi$ remains of the form
\begin{equation}
    \label{eq:entropyProductionFullCollision}
    \theta \xi = 2\nu \mathbb{D}^d : \mathbb{D}^d + \kappa \frac{\abs{\nabla_{\vec{x}} \theta}^2}{\theta},
\end{equation}
the details this calculations can be found in \cite[Chapter 7]{ferzigerKaper}.

For the full Boltzmann operator~\eqref{eq:full_collision}, the local collision frequency, i.e.\ the reciprocal of the effective relaxation time, is defined by evaluating the scattering kernel at the local Maxwellian~$f^{(0)}$:
\begin{equation}\label{eq:collision-freq}
  \frac{1}{\tau}
  := \int_{\mathbb{R}^3}\!\int_{S^2}
       B(\beta,\varphi,\lvert v - v_*\rvert)\,
       f^{(0)}\,d\beta\,d\varphi\,dv_*.
\end{equation}
For hard spheres, $B(\beta,\varphi,\lvert v-v_*\rvert) = 4r_0\lvert v-v_*\rvert\sin\theta\cos\theta$, and a direct computation yields $1/\tau \propto \rho\sqrt{e}$, as shown in \eqref{eq:scalingT}.
For a general power-law kernel
\begin{equation}\label{eq:power-law}
  B(\beta,\varphi,\lvert v-v_*\rvert)
  = b(\beta)\,\lvert v-v_*\rvert^{\lambda},
  \qquad \lambda \geq 0,
\end{equation} the Maxwellian average of $\lvert v-v_*\rvert^\lambda$ scales as $\theta^{\lambda/2}$, by direct computations, so
\begin{equation}\label{eq:tau-powerlaw}
  \frac{1}{\tau}
  \sim \rho\,\theta^{\lambda/2} \qquad \Rightarrow \qquad \theta \sim \left(\frac{1}{\tau\rho}\right)^{2/\lambda}.
\end{equation}
For the power-law kernel~\eqref{eq:power-law}, the Chapman--Enskog procedure evaluates the transport coefficients $\nu$ and $\kappa$ via bracket integrals which still lead to
\begin{equation}\label{eq:eigenvalue-visc}
	\nu \sim \tau p = \tau \rho \mathrm{R}_s \theta \sim \mathrm{R}_s (\rho \tau)^{1 - 2/\lambda}.
\end{equation}
where now $\tau$ is the new relaxation number defined as above and $\theta\xi$ has the form of \eqref{eq:entropyProductionFullCollision} \cite[Chapter 7]{ferzigerKaper}.
An identical argument applies to $\kappa$. Hence any interaction kernel of the form \eqref{eq:power-law} for which $0\leq\lambda<2$ yields the required inverse proportionality of the transport coefficients to $\tau$.

Notice that in this case we have only identified a sufficient condition for hard interaction scattering kernels. We plan to investigate the necessary conditions for soft interactions together with the grazing limit in future work.

\color{black}
\bibliography{refs}{}
\bibliographystyle{plain}
\end{document}